\newtheorem{theorem}{Theorem}[section]
\newtheorem{lemma}[theorem]{Lemma}
\newtheorem{example}[theorem]{Example}
\newtheorem{defn}[theorem] {Definition}
\newtheorem{Two Examples of equivariant maps}[theorem] {Two Examples of equivariant maps}
\newtheorem{Proof of the theorem 2.3:}{Proof of the theorem 2.3:}
\newtheorem{Some results from Group representation theory} [theorem] {Some results from Group representation theory}
\newtheorem{corollary}[theorem]{Corollary}
\newtheorem{Future work} [theorem]{Future work}
\newtheorem{thm}[theorem] {Theorem}
\newtheorem{Properties of equivariant linear maps} [theorem]{Properties of equivariant linear maps}
\newtheorem{Main result:}[theorem]  {Main result:}
\newlist{Properties}{enumerate}{10}
\setlist[Properties]{label*=\arabic*}
\newcommand{\Hcal}{\mathcal{H}}
\newcommand{\Kcal}{\mathcal{K}}
\newcommand{\Acal}{\mathcal{A}}
\newcommand{\Ucal}{\mathcal{U}}
\newcommand{\Scal}{\mathcal{S}}
\newcommand{\Bcal}{\mathcal{B}}
\newcommand{\Cbb}{\mathbb{C}}
\newcommand{\NN}{\mathbb{N}}
\newcommand{\SEP}{\operatorname{SEP}}
\newcommand{\Tr}{\operatorname{Tr}\,}
\begin{document}
\title{Characterization of equivariant maps and application to entanglement detection}
\author{Ivan Bardet, Beno\^\i t Collins, Gunjan Sapra}
\maketitle
\begin{abstract}
	
We study equivariant linear maps between finite-dimensional matrix algebras, as introduced in \cite{COLLINS2018398}. These maps satisfy an 
algebraic property which makes it easy to study their positivity or $k$-positivity. They are therefore particularly suitable for applications to 
entanglement detection in quantum information theory. We characterize their Choi matrices. 
In particular, we focus on a subfamily that we call $(a,b)$-unitarily equivariant. They can be seen as both 
a generalization of maps invariant under unitary conjugation as studied by Bhat in \cite{bhat2011} and as a generalization of the equivariant 
maps studied in \cite{COLLINS2018398}. Using representation theory, we fully compute them and study their graphical representation, 
and show that they are basically enough to study all equivariant maps. We 
finally apply them to the problem of entanglement detection and prove that they form a sufficient (infinite) family of positive maps to detect all $k$-entangled density matrices.
	\end{abstract}
	
\section{Introduction}

Due to their crucial role in numerous tasks in quantum processing and quantum computation, it is of great importance to decide whether a 
certain density matrix on a bipartite system is entangled or not~\cite{nielsen2002quantum,terhal2002detecting}. However this problem, referred 
to as \emph{entanglement detection}, is known to be a computationally hard one in quantum information 
theory~\cite{Gharibian:2010:SNQ:2011350.2011361,gurvits2003classical}. In the last two decades, lots of effort have been accomplished 
in order to determine necessary and sufficient conditions for a density matrix to be entangled. For instance, one such criterion is the $k$-extendibility hierarchy~\cite{doherty2004complete}, which provides a sequence of tests to check that the density matrix is separable, that ultimately detects all entangled states.\\

Another appealing method is the positive map criterion~\cite{HORODECKI19961}, which gives an operational interpretation of the Hahn--Banach theorem applied to the convex set of separable density matrices. The Horodecki's Theorem thus states that a density matrix $\rho$ is entangled if and only if there exists a positive map $\Phi$ such that $(i\otimes\Phi)(\rho)$ is not positive semi-definite, where $\Phi$ only acts on 
one of the two subsystems. Necessarily, this positive map is not completely positive.\\

The most well-known example of such map is the transpose map, and it leads to the \emph{positive partial transpose} (PPT) 
criterion~\cite{peres1996separability}. However, because of the complex geometrical structure of the set of separable density matrices, an 
infinite number of maps that one does not know how to describe efficiently would be necessary to detect all entangled states {(see for instance 
\cite{doi:10.1063/1.4712302}). The goal of this article is to propose a family of maps, with increasing complexity, which suffice to detect 
any entanglement. Their main interest lies in that it is rather easy to check if they are positive or not.\\

Compared to the $k$-extendibility hierarchy, the Horodecki's Theorem can be generalized to test the degree of entanglement of a density matrix quantified by its Schmidt number. 
A density matrix $\rho$ has Schmidt number less than or equal to $t$ if and only if $(i\otimes\Phi)(\rho)\geq0$ for all $t$-positive maps.

Similarly to positivity, we propose a sufficient family of $t$-positive maps for the problem of $t$-positivity detection. Again, it is rather easy to check if they are $t$-positive or not.\\

Indeed, proving that a map is completely positive is easy: it is enough to check that its Choi matrix is positive 
semi-definite~\cite{choi1975completely}. Such a criterion does not exist in general to check that a given map is $t$-positive, which makes it 
difficult to find interesting examples of positive but not completely positive maps. Choi \cite{choi1072positive} gave in 1973 the first example 
of a linear map on $M_{n}(\mathbb{C})$ which is $(n-1)$-positive but not $n$-positive. One decade later in 1983, Takasaki and Tomiyama 
\cite{Takasaki1983} gave a method to construct any number of linear maps on finite-dimensional matrix algebras, which are $(k-1)$-positive and 
not $k$-positive. Interestingly, all these examples fall in the class of maps introduced by Collins et al. in~\cite{COLLINS2018398}, called 
\emph{equivariant maps}. In the same article, they proved that if a linear map happens to be equivariant, its $k$-positivity depends upon the 
positivity of a $k$-blocks submatrix of the corresponding Choi matrix. In a sense, it means that it is as easy to check that an equivariant map is 
$k$-positive, as it is to check that it is completely positive. They subsequently studied a parametric family of equivariant linear maps on 
$M_{3}(\mathbb{C})$ with values in $M_3(\Cbb)^{\otimes 2}$. In this article, we analyse in more depth equivariant maps from 
$M_{n}(\mathbb{C})$ to $M_n(\Cbb)^{\otimes k}$ for all $k,n\geq1$, and characterize a large class of them. Note however that not all known examples of positive maps are equivariant (see for instance \cite{cho1992generalized,muller2018decomposability}).\\

More precisely, we are concerned with two different objectives. The first one is to get a full understanding of equivariant maps. We only get 
sparse results in this direction. As a first insight, we give a characterization of equivariant maps in terms of their Choi matrices in Theorem 
\ref{V(U)equiv}. We also define a subclass of them, the unitarily equivariant maps, which are more tracktable objects. \Cref{coro_main} is one of 
the main results of this paper, where we prove that unitarily equivariant maps - were the equivariance property is given in terms of a unitary 
representation - are in fact covariant maps with respect to a unitary representation of the unitary group. \\

The second objective is to fully compute a subclass of such maps, the $(a,b)$-unitarily equivariant maps, 
generalizing the examples in \cite{COLLINS2018398} and the characterization in \cite{bhat2011} of linear maps invariant under conjugation. \\

We then focus on the application to entanglement detection of such maps. We prove that any $t$-entangled density matrix - that is with Schmidt Number strictly lower than $t$ - can be detected using a 
$t$-positive unitarily equivariant map. Combined with previous result, this gives an explicit family of $t$-positive maps that are sufficient of $t$-entanglement detection.\\

This article is organized as follows. In Section \ref{sect1}, we introduce the equivariant maps, list some of their properties and give some 
examples, among which the one of Choi \cite{choi1072positive}, Takasaki and Tomiyama \cite{Takasaki1983}. In section \ref{sect2}, we give 
different characterizations mentioned above. We study the graphical representations of the Choi matrices of $(a,b)$-unitarily equivariant maps in Section \ref{sect3}. We focus on entanglement detection in 
Section \ref{sect4}.

\section{Equivariant maps: definitions and examples}\label{sect1}
In this section, we present definitions of equivariant linear maps and give an explanation as to why it is important to study these maps. 

\subsection{Notations, definitions and first examples}

For a positive integer $n$, $M_n(\Cbb)$ is the set of square matrices, with entries from $\mathbb{C}$ of size $n$, with canonical 
orthonormal basis $(e_{ij})_{1\leq i,j\leq n}$. The unitary group on $\Cbb^n$ is denoted by $\Ucal_n$. We denote by $\mathds{1}_{n}$ 
the identity matrix in $M_{n}(\mathbb{C})$ and by $i_{n}$ the identity map acting on $M_{n}(\mathbb{C})$. We write 
$B_{n}= \sum_{i=1}^{n}|e_{i}\rangle \otimes |e_{i}\rangle$, the non-normalized maximally entangled \emph{Bell} vector in 
$(\mathbb{C}^{n} \otimes \mathbb{C}^{n})$. The rank-one projection on $B_n$ is denoted by $B_{n^{2}}=B_{n}B_{n}^{*}$. Finally, $A^{t}$ and 
$\mathrm{Tr}(A)$ denote the transpose and (non-normalized)
trace of a matrix $A \in M_{n}(\mathbb{C})$ respectively. $\theta_{n}$ denotes the transpose map $A \mapsto A^{t}$ on $M_{n}(\mathbb{C})$.\\

If $\Hcal$ and $\Kcal$ are two complex Hilbert spaces, $\Bcal(\mathcal{H})$ denotes the space of all bounded linear operators on 
$\mathcal{H}$ and $\Bcal(\Hcal,\Kcal)$  the space of all bounded linear maps from $\Hcal$ to $\Kcal$. A self-adjoint map 
$\Phi\in \Bcal(\Hcal,\Kcal)$ is such that $\Phi(X^*)=\Phi(X)^*$ for all $X\in \Hcal$. In the following, we will assume that all linear maps are self-adjoint.\\

We are now ready to give the main definitions of this article.

\begin{defn}Let $n,N \geq 2 $ be two natural numbers. A 
linear map $\Phi:M_{n}(\mathbb{C}) \rightarrow M_{N}(\mathbb{C})$ 
is called:
\begin{enumerate}[label=(\roman*)]
\item\emph{Equivariant}, if for every unitary matrix $U \in M_{n}(\mathbb{C})$ there exists $V = V(U) \in M_{N}(\mathbb{C})$ such that 
\begin{align} \label{eqgeneral}
\Phi(UXU^{*}) & =V(U)\, \Phi(X)\,V(U)^{*} \qquad \forall \hspace{2mm} X \in  M_{n}(\mathbb{C});
\end{align}
\item\emph{Unitarily equivariant}, if furthermore the operator $V(U)$ in the previous definition can be taken unitary.
		\item\emph{$(a,b)$-unitarily equivariant}, if there are $a, b$ natural numbers such that $N=n^{a+b}$ and $M_N(\Cbb)\equiv M_n(\Cbb)^{\otimes a}\otimes M_n(\Cbb)^{\otimes b}$, and such that for every unitary $U \in M_{n}(\mathbb{C})$,
		\begin{equation}\label{equivariant}
		\Phi(UXU^{*}) = (\overline{U}^{\otimes a} \otimes U^{\otimes b})\,\Phi(X)\, (\overline{U}^{\otimes a} \otimes U^{\otimes b})^{*} \qquad \forall \hspace{2mm} X \in M_{n}(\mathbb{C}).
		\end{equation}
\end{enumerate}
\end{defn}
Thus, $(a,b)$-unitarily equivariant maps are a subfamily of unitarily equivariant maps, that is itself a 
subclass of equivariant maps. 
We  prove in \Cref{coro_main} that every unitarily equivariant map where $U\mapsto V(U)$ is a 
unitary representation can be seen as a corner of a sum of
$(a,b)$-unitarily equivariant maps. 
In the following, we give a family of equivariant and not unitariy equivariant maps.
\begin{example}
Let $A \in M_{n}(\mathbb{C})$ be a non-unitary invertible matrix. Define:
\begin{equation*}\label{eq_example_Choi}
 \begin{array}{cccc}
  \Phi_{A} :& M_{n}(\mathbb{C}) &\rightarrow & M_{n}(\mathbb{C})  \\
        & X                & \mapsto    & AXA^{*}
 \end{array}
\end{equation*}
Let $U \in M_{n}(\mathbb{C})$ be a unitary. Then for every $X \in M_{n}(\mathbb{C})$,
\begin{align*}
\Phi(UXU^{*})& = AUXU^{*}A^{*} \\
             & = (AUA^{-1})(AXA^{*})(A^{*})^{-1}U^{*}A^{*} \\
             & = (AUA^{-1})\Phi(X)(AUA^{-1})^{*}
\end{align*}

Clearly, $AUA^{-1}$ may not be a unitary as $A$ is not unitary. Therefore, $\Phi_{A}$ is equivariant and not unitarily equivariant.

\end{example}

\begin{example}
Bhat characterized in \cite{bhat2011} all $(0,1)$-unitarily equivariant maps (that is, for $a=0$ and $b=1$) on $\Bcal(\Hcal)$ for some Hilbert space 
 $\Hcal$, not necessarily finite dimensional. More precisely, he proved that a linear map $\Phi$ acting on $\Bcal(\Hcal)$ satisfies 
 $\Phi(UXU^*)=U\Phi(X)U^*$ for all $X\in\Bcal(\Hcal)$ iff there exist $\alpha,\beta\in\Cbb$ such that
 \[\Phi(X)=\alpha X+\beta\Tr[X]I_\Hcal\,.\]
Directly from this, we get that any $(1,0)$-unitarily equivariant map $\Psi$ on $M_{n}(\mathbb{C})$
 is of the form 
$\Psi(X)=\alpha \theta_n(X)+\beta\Tr[X]I_\Hcal$, where $\theta_n$ is the transpose map. Indeed, we can check that 
$\theta_n\circ\Psi$ is $(0,1)$-unitarily equivariant and apply Bhat's result. We shall similarly characterize all 
$(a,b)$-unitarily equivariant maps in \Cref{Ueq}, thus generalizing these two case, when $\Hcal$ is 
finite dimensional.
\end{example}

Establishing $k$-positivity of a linear map is a difficult task, even on low dimensional matrix algebras. In this regard, a criterion which is 
a necessary and sufficient condition for an equivariant map to be $k$-positive was given in \cite{COLLINS2018398}. 

\begin{thm}\label{k-positivity}\cite[Theorem 2.2]{COLLINS2018398} 
	Let $\Phi:{M}_{n}(\mathbb{C}) \rightarrow M_{N}(\mathbb{C})$ be an equivariant map. Then, for $k \leq min\{n,N\},\hspace{0.5mm} \Phi$ 
	is $k$-positive if and only if the block matrix $[\Phi(e_{ij})]_{i,j=1}^{k}$
	is positive semi-definite, where $(e_{ij})_{1\leq i,j\leq n}$ are the matrix units in $M_{n}(\mathbb C).$ 
\end{thm}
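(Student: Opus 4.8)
The plan is to reduce $k$-positivity to a statement about rank-one projectors and then to use equivariance to diagonalize the test vectors. I would first record the standard reduction: $\Phi$ is $k$-positive if and only if $(i_k\otimes\Phi)(|v\rangle\langle v|)\ge 0$ for every $v\in\Cbb^k\otimes\Cbb^n$. Indeed, every positive semidefinite $P\in M_k(\Cbb)\otimes M_n(\Cbb)$ is a sum of rank-one projectors $|v\rangle\langle v|$, and $i_k\otimes\Phi$ is linear, so positivity on projectors is equivalent to $(i_k\otimes\Phi)$ sending all of the positive cone of $M_{kn}(\Cbb)$ into the positive cone. The forward implication is then immediate and uses no equivariance: taking the partial Bell vector $u=\sum_{s=1}^k e_s\otimes e_s$ (legitimate since $k\le n$) gives $(i_k\otimes\Phi)(|u\rangle\langle u|)=\sum_{s,t=1}^k e_{st}\otimes\Phi(e_{st})=[\Phi(e_{ij})]_{i,j=1}^k$, so $k$-positivity forces this block matrix to be positive semidefinite.

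For the converse I would exploit equivariance in the form of an intertwining relation. From $\Phi(UXU^{*})=V(U)\Phi(X)V(U)^{*}$ and linearity one obtains, for all $Y\in M_k(\Cbb)\otimes M_n(\Cbb)$,
\[
(i_k\otimes\Phi)\big((\mathds{1}_k\otimes U)\,Y\,(\mathds{1}_k\otimes U)^{*}\big)=(\mathds{1}_k\otimes V(U))\,(i_k\otimes\Phi)(Y)\,(\mathds{1}_k\otimes V(U))^{*}.
\]
Since conjugation $A\mapsto SAS^{*}$ by an \emph{arbitrary} matrix $S$ preserves positive semidefiniteness, the right-hand side is positive whenever $(i_k\otimes\Phi)(Y)$ is; this needs nothing about $V(U)$ beyond its being a matrix, so the argument covers general equivariant (not only unitarily equivariant) maps. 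Separately, for any $W\in\Ucal_k$ the identity $(i_k\otimes\Phi)\big((W\otimes\mathds{1}_n)Y(W\otimes\mathds{1}_n)^{*}\big)=(W\otimes\mathds{1}_N)(i_k\otimes\Phi)(Y)(W\otimes\mathds{1}_N)^{*}$ holds trivially because the first leg carries the identity map. Taking $Y=|v\rangle\langle v|$ and noting $(\mathds{1}_k\otimes U)|v\rangle\langle v|(\mathds{1}_k\otimes U)^{*}=|(\mathds{1}_k\otimes U)v\rangle\langle(\mathds{1}_k\otimes U)v|$, I conclude that the set of $v$ for which $(i_k\otimes\Phi)(|v\rangle\langle v|)\ge0$ is invariant under $v\mapsto(W\otimes U)v$ for all $W\in\Ucal_k$, $U\in\Ucal_n$.

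With this invariance in hand I would Schmidt-decompose an arbitrary $v\in\Cbb^k\otimes\Cbb^n$ as $v=\sum_{s=1}^{r}\lambda_s f_s\otimes g_s$ with $r\le k$ (as $k\le n$), orthonormal systems $\{f_s\}\subset\Cbb^k$, $\{g_s\}\subset\Cbb^n$, and $\lambda_s>0$. Choosing $W\in\Ucal_k$ with $Wf_s=e_s$ and $U\in\Ucal_n$ with $Ug_s=e_s$ brings $v$ to the diagonal form $v'=\sum_{s=1}^{r}\lambda_s\,e_s\otimes e_s$, so by the invariance it suffices to check positivity for $v'$. A direct computation then yields
\[
(i_k\otimes\Phi)(|v'\rangle\langle v'|)=\sum_{s,t}\lambda_s\lambda_t\,e_{st}\otimes\Phi(e_{st})=(D\otimes\mathds{1}_N)\,[\Phi(e_{ij})]_{i,j=1}^{k}\,(D\otimes\mathds{1}_N)^{*},
\]
where $D=\diag(\lambda_1,\dots,\lambda_r,0,\dots,0)\in M_k(\Cbb)$. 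Since the block matrix is assumed positive semidefinite, this congruence is again positive semidefinite, which establishes the converse.

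The routine ingredients (reduction to projectors, Schmidt decomposition, the diagonal-congruence computation) are standard; the only real content, and the step I expect to be the crux, is the observation that equivariance supplies exactly the unitary freedom on the $\Cbb^n$-leg needed to diagonalize an arbitrary test vector, while the $\Cbb^k$-leg is free because $i_k$ is the identity. The hypothesis $k\le\min\{n,N\}$ enters only to guarantee that the matrix units $e_{st}\in M_n(\Cbb)$ are defined and that every Schmidt rank is at most $k$; at the endpoint $k=\min\{n,N\}$ the statement recovers the complete-positivity (Choi) criterion.
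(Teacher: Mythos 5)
Your proof is correct. Note that the paper itself does not prove this statement: it is quoted verbatim from \cite[Theorem 2.2]{COLLINS2018398}, so there is no internal proof to compare against; your argument is a valid, self-contained reconstruction of what is essentially the standard route. Both directions check out: the forward implication via the truncated Bell vector $u=\sum_{s=1}^k e_s\otimes e_s$ (using only $k\le n$, no equivariance), and the converse via the Schmidt decomposition together with the invariance of the set $\{v : (i_k\otimes\Phi)(|v\rangle\langle v|)\ge 0\}$ under $v\mapsto (W\otimes U)v$. The one point that genuinely needed care --- that equivariance with a possibly non-unitary, even non-invertible $V(U)$ still suffices, because congruence $A\mapsto SAS^*$ by an arbitrary matrix $S$ preserves positive semidefiniteness --- is exactly the point you isolated and handled correctly, which is what makes the theorem apply to all equivariant maps (as the paper's Example 2.2, the maps $\Phi_A$, requires) and not merely to unitarily equivariant ones.
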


Incidentally, some well-known examples of $k$-positive but not completely positive linear maps are actually examples of unitarily 
equivariant maps, even though this was not explicitly stated when they were introduced. We recall these examples and give alternative 
proof of their $k$-positivity, based on \Cref{k-positivity}.  

\begin{enumerate}[label=(\roman*)]
\item Every $*$-homomorphism or $*$-anti-homomorphism on a finite-dimensional matrix algebras is equivariant. Such maps are 
always completely positive or co-completely positive.
\item  Choi \cite[Theorem 1]{choi1072positive} gave the first example of a linear map on $M_{n}(\mathbb{C})$ which is $(n-1)$-positive 
but not $n$-positive, given by
\begin{equation*}\label{eq_example_Choi}
 \begin{array}{cccc}
  \Phi :& M_{n}(\mathbb{C}) &\rightarrow & M_{n}(\mathbb{C})  \\
        & A                 & \mapsto    & \{(n-1)\mathrm{Tr}(A)\}\mathds{1}_{n}- A\,.
 \end{array}
\end{equation*}
The above map is unitarily equivariant (take $V(U)=U$). We apply Theorem \ref{k-positivity} to prove that $\Phi$ is $(n-1)$-positive. 
The eigenvalues of $[\Phi(e_{ij})]_{i,j=1}^{n-1}$ are $0$ with multiplicity $1$ and $(n-1)$ with multiplicities $(n(n-1)-1),$ which are positive. 
Hence, $\Phi$ is $(n-1)$-positive. 
\item Tomiyama~\cite[Theorem 2]{TOMIYAMA1985169} gave an example of a parametric family of linear maps and studied the 
conditions of its $k$-positivity. The map is defined by:
\begin{equation*}\label{eq_example_Tomiyama}
 \begin{array}{cccc}
  \Psi :& M_{n}(\mathbb{C}) &\rightarrow & M_{n}(\mathbb{C})  \\
        & A                 & \mapsto    & \frac{\lambda}{n}\mathrm{Tr}(A)\mathds{1}_{n} +(1-\lambda)A\,.
 \end{array}
\end{equation*}
This map is $k$-positive for any $1 \leq k \leq n$ if and only if $0 \leq \lambda \leq 1+\frac{1}{nk-1}$. We give an alternative proof based 
on Theorem \ref{k-positivity}. Indeed, $\Psi$ is unitarily equivariant (take $V(U)= U$) and to find the conditions of $k$-positivity on $\Psi$, 
we only need to find the values of $\lambda$ such that $[\Psi(e_{ij})]_{i,j=1}^{k}$ is positive. It can be easily seen that $\frac{\lambda}{n}$ 
and $\frac{\lambda}{n}+(1-\lambda)k$ are two distinct eigenvalues of $[\Psi(e_{ij})]_{i,j=1}^{k}$ with different multiplicities. Therefore, the 
map $\Psi$ is $k$-positive if and only if $\lambda \geq 0$ and $\lambda \leq 1+\frac{1}{nk-1}$.
 
\item Collins et.al \cite{COLLINS2018398} gave the following family of parametric linear maps which are $(1,1)$-unitarily equivariant.\\
Let $\alpha$ and $\beta$ be two real numbers and $n \geq 3.$ Then the family of maps,
\begin{equation}\label{eq_Phi_alphabeta}
\begin{array}{cccc}
	\Phi_{\alpha, \beta,n} :& M_{n}(\mathbb{C}) &\rightarrow & M_{n}(\mathbb C) \otimes M_{n}(\mathbb C)  \\
	                          & A               & \mapsto &  A^{t} \otimes \mathds{1}_{n}+ \mathds{1}_{n} \otimes A + \mathrm{Tr}(A)(\alpha \mathds{1}_{n^{2}}+\beta B_{n^{2}})\,. 
	                          \end{array}
\end{equation}

There are values of parameters $\alpha$ and $\beta$ for which the family of maps $\Phi_{\alpha, \beta,n}$ is positive and not 
completely positive, more detail can be found in \cite{ COLLINS2018398}.
 \end{enumerate} 

\subsection{Properties of equivariant linear maps}
In this section, we study some basic properties of equivariant linear maps, which will help us to give a characterization of these maps.

\begin{lemma} \label{vspace}\
Let $a,b \in \mathbb{N}$. Then,
\begin{enumerate}
\item The set of all $(a,b)$-unitarily equivariant maps is a vector space.
\item The set of all unitarily equivariant maps is not a vector space. Neither is the set of equivariant maps. 
\end{enumerate}
\end{lemma}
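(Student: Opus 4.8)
The plan is to treat the two parts separately, the first being a routine closure check and the second requiring a well-chosen counterexample. For part (1), the decisive observation is that in the definition of $(a,b)$-unitarily equivariance the intertwiner $V(U)=\overline{U}^{\ten a}\ten U^{\ten b}$ is \emph{fixed} and does not depend on the particular map $\Phi$. Hence I would simply verify that the set is a linear subspace of the space of linear maps $M_n(\Cbb)\to M_N(\Cbb)$ with $N=n^{a+b}$: it contains the zero map, and if $\Phi_1,\Phi_2$ both satisfy \eqref{equivariant} with this common $V(U)$, then for scalars $\la,\mu\in\Cbb$ one gets $(\la\Phi_1+\mu\Phi_2)(UXU^*)=V(U)(\la\Phi_1+\mu\Phi_2)(X)V(U)^*$ by linearity, so the combination is again $(a,b)$-unitarily equivariant. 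No obstacle is expected here.

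For part (2) the point is precisely the opposite: a general (unitarily) equivariant map is allowed its own $\Phi$-dependent intertwiner $V(U)$, and summing two maps with incompatible intertwiners should destroy equivariance. I would exhibit a single counterexample in $M_n(\Cbb)\to M_n(\Cbb)$ that settles both claims at once. Take the identity map $i_n$, which is unitarily equivariant with $V(U)=U$, and the transpose map $\theta_n$, which is unitarily equivariant with $V(U)=\overline{U}$ (since $(UXU^*)^t=\overline{U}X^t U^t$ and $\overline{U}^*=U^t$). Both lie in the two sets under consideration, so it suffices to show that their sum $\Phi:=i_n+\theta_n$, i.e. $\Phi(X)=X+X^t$, fails to be equivariant; because unitarily equivariant maps form a subclass of equivariant maps, this simultaneously shows $\Phi$ is not unitarily equivariant, and hence that neither set is closed under addition.

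To refute equivariance of $\Phi$ I would argue by contradiction. Assume some $V(U)\in M_n(\Cbb)$ satisfies $\Phi(UXU^*)=V(U)\Phi(X)V(U)^*$ for all $X$. Evaluating on an antisymmetric matrix $A$ (so $A^t=-A$ and $\Phi(A)=A+A^t=0$) forces the right-hand side to vanish, while the left-hand side equals $UAU^*-\overline{U}AU^t$; thus equivariance would impose $UAU^*=\overline{U}AU^t$ for every antisymmetric $A$ and every unitary $U$. The final step, which I regard as the only genuine computation, is to produce a single pair violating this identity: for $n=2$ one may take $A=\begin{pmatrix}0&1\\-1&0\end{pmatrix}$ together with a unitary $U=\begin{pmatrix}a&b\\ c&d\end{pmatrix}$ for which $a\overline{b}\notin\RR$ (for instance $U=\tfrac{1}{\sqrt2}\begin{pmatrix}1&i\\ i&1\end{pmatrix}$), where a direct check gives $UAU^*\neq\overline{U}AU^t$; for general $n\geq2$ one embeds this as $U\oplus\Id_{n-2}$ and $A\oplus 0$. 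This contradiction shows $\Phi$ is not equivariant, completing part (2). The only mild obstacle is choosing the antisymmetric test matrix and the unitary cleanly enough that the inequality is transparent; everything else is formal.
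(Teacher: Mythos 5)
Your proof is correct, and for part (1) it coincides with the paper's (a routine closure check, hinging as you note on the fact that the intertwiner $\overline{U}^{\otimes a}\otimes U^{\otimes b}$ is the same for all maps). For part (2) you pick the same counterexample as the paper --- the sum $i_n+\theta_n$ of the identity and transpose maps, each unitarily equivariant --- but you certify its non-equivariance by a different mechanism. The paper works in $M_2(\Cbb)$ with $X=e_{11}$ and $U=\tfrac{1}{\sqrt2}\begin{pmatrix}1&1\\-\iota&\iota\end{pmatrix}$, computing that $(\Phi_1+\Phi_2)(Ue_{11}U^*)=\mathds 1_2$ has rank $2$ while $(\Phi_1+\Phi_2)(e_{11})=2e_{11}$ has rank $1$, and congruence $Y\mapsto VYV^*$ cannot increase rank. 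You instead exploit the kernel: $\Phi=i_n+\theta_n$ annihilates antisymmetric matrices, so equivariance would force $\Phi(UAU^*)=V\,0\,V^*=0$, i.e.\ $UAU^*=\overline{U}AU^t$, for every antisymmetric $A$; your witness checks out, since with $A=\begin{pmatrix}0&1\\-1&0\end{pmatrix}$ and $U=\tfrac1{\sqrt2}\begin{pmatrix}1&i\\i&1\end{pmatrix}$ one gets $UAU^*=\operatorname{diag}(-i,i)$ whereas $\overline{U}AU^t=\operatorname{diag}(i,-i)$ (equivalently, $UAU^*$ fails to be real, which is exactly your criterion $a\overline{b}\notin\RR$ applied to the $(1,1)$ entry $2i\operatorname{Im}(a\overline b)$). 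Both arguments rest on the same obstruction --- an invariant that cannot increase under congruence, rank for the paper, vanishing for you --- so the routes are close in spirit; your version has the mild advantages of reading off the contradiction from $0\neq\Phi(UAU^*)$ without any rank discussion, and of extending verbatim to every $n\geq2$ via the block embedding $U\oplus\mathds 1_{n-2}$, $A\oplus 0$, whereas the paper records the example only for $n=2$ (which suffices, since its logic is identical: one sum of two unitarily equivariant maps failing to be equivariant kills both claims at once).
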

\begin{proof}
It is an easy calculation to show that the set of all $(a,b)$-unitarily equivariant maps is closed under 
addition and scalar multiplication. Hence, this set is a vector space.\\
We give an example of two unitarily equivariant maps such that their sum is not even equivariant, which is enough to conclude that both 
sets are not vector spaces.

Let $\Phi_{1},\Phi_2:M_{2}(\mathbb{C}) \rightarrow M_{2}(\mathbb{C})$ be given by $\Phi_1=i_2$ and $\Phi_{2}=\theta_2$. It is 
straightforward that both maps are unitarily equivariant. We prove that the map $(\Phi_{1}+\Phi_{2})$ is not equivariant, that is, there exists 
a unitary $U \in M_{2}(\mathbb{C})$ such that there is no $V \in M_{2}(\mathbb{C})$ with 
$$(\Phi_{1}+\Phi_{2})(UXU^{*})  =V (\Phi_{1}+\Phi_{2})(X)V^{*} \hspace{2mm} \forall \hspace{2mm} X \in  M_{2}(\mathbb{C})\,.$$
Let \\ $$U=  
\begin{bmatrix}
\frac{1}{\sqrt{2}} & \frac{1}{\sqrt{2}}  \\
\frac{-\iota}{\sqrt{2}} & \frac{\iota}{\sqrt{2}}  \\
\end{bmatrix}\,.$$
Take $X=e_{11}$, where $e_{11}$ is a matrix unit in $M_{2}(\mathbb{C})$ and $\iota^{2}=-1$. 
We prove that there is no $V \in M_{2}(\mathbb{C})$ such that,
		\begin{align}\label{counter}
		(\Phi_{1}+\Phi_{2})(Ue_{11}U^{*}) & = V (\Phi_{1}+\Phi_{2})(e_{11})V^{*}\,.
		\end{align} \\
$$(\Phi_{1}+\Phi_{2})(Ue_{11}U^*)= \,
\begin{bmatrix}
	1 &  0  \\
	0 &  1  \\
\end{bmatrix}\,.$$
Therefore, $ rank((\Phi_{1}+\Phi_{2})(Ue_{11}U^*))= 2$.\\ Since, $rank(\Phi_{1}+\Phi_{2})(e_{11})=1$. Therefore there exists no $V \in M_{2}(\mathbb{C})$ such that,
\begin{align}\label{counter}
(\Phi_{1}+\Phi_{2})(Ue_{11}U^{*}) & = V (\Phi_{1}+\Phi_{2})(e_{11})V^{*}\,.
\end{align} 
 
	\end{proof}	

\begin{example}\normalfont
To conclude this section, we give an example of a completely positive linear map on $M_{2}(\mathbb{C})$ which is not equivariant. 
Consider the linear map 
\[\begin{array}{cccc}
   \Phi:& M_{2}(\mathbb{C}) & \rightarrow & M_{2}(\mathbb{C})   \\
        & A                 &\mapsto      & A - A^{t}+ \mathrm{Tr}(A) \mathds{1}_{2}
  \end{array}\,.\]
The map $\Phi$ is completely positive, but it is not equivariant.

Let \\ $$U=  
		\begin{bmatrix}
		\frac{1}{\sqrt{2}} & \frac{1}{\sqrt{2}}  \\
		\frac{-\iota}{\sqrt{2}} & \frac{\iota}{\sqrt{2}}  \\
		\end{bmatrix}
\text{and}  \quad X=  -\iota e_{11}+\iota e_{22}$$
Where $\iota^{2}=-1$. 
		Then 
		$$\Phi(UXU^*)=  
		\begin{bmatrix}
		0 & 2 \\
		-2 &  0  \\
		\end{bmatrix} and   \quad \Phi(X)=  
		\begin{bmatrix}
		0 & 0  \\
		0 & 0 \\
		\end{bmatrix}\,.$$
		Therefore, there exists no $V \in M_{2}(\mathbb{C})$ such that $\Phi(UXU^{*})=V\Phi(X)V^{*}$ for unitary $U$ and $X=-\iota e_{11}+\iota e_{22}.$ Hence, $\Phi$ is not equivariant.
 
\end{example}

\section{Characterization of equivariant maps}\label{sect2}
In this section, we tackle the problem of characterizing equivariant maps defined on finite-dimensional matrix algebras. We first 
characterize general equivariant linear maps in terms of their Choi matrices in \Cref{V(U)equiv}. 
Then, we prove that every unitarily 
equivariant map where $U\mapsto V(U)$ is a unitary representation can be realized as an 
$\oplus_{i\in I}(a_{i},b_{i})$-unitarily equivariant map for 
a finite set $I$ and for
some values of $a_i$ and $b_i$. Finally, we compute explicitly all 
$(a,b)$-unitarily equivariant maps in \Cref{Ueq}. 

\subsection{Characterization of equivariant maps in terms of their Choi matrix} 
We recall that the Choi matrix $C_\Phi$ of a linear map $\Phi\in\Bcal(M_n(\Cbb),M_N(\Cbb))$ is defined by 
\begin{equation} \label{Choimatrixdefn}
C_\Phi :=(i_n\otimes \Phi)(B_{n^2})
\end{equation} 
where $B_{n^2}$ is the unnormalized Bell density matrix. Equation (\ref{Choimatrixdefn}) defines a one-to-one 
linear correspondence between matrices $C\in M_{nN}(\Cbb)$ and maps $\Phi\in\Bcal(M_n(\Cbb),M_N(\Cbb))$. The following theorem is 
applicable to all equivariant maps. 

\begin{theorem}\label{V(U)equiv}
	Let $\Phi:M_{n}(\mathbb{C}) \rightarrow M_{N}(\mathbb{C})$ be a linear map. Then the following two assertions are equivalent:
	\begin{enumerate}[label=(\roman*)]
		\item  $\Phi$ is an equivariant map.
		\item For all unitary matrices $U$ in $M_n(\Cbb)$, there exists a matrix $V=V(U)\in M_N(\Cbb)$ such that $$[C_{\Phi},\overline{U}\otimes V]=0.$$
	\end{enumerate} 
\begin{proof}
To prove $(i) \implies (ii)$, we use the fact that $(\overline{U} \otimes U)$ commutes with $B_{n^{2}}$ for every unitary matrix 
$U \in M_{n}(\mathbb{C})$. We have the following
\begin{align}
C_{\Phi} &= (i_{n} \otimes \Phi)(B_{n^{2}}) \nonumber \\
& = (i_{n} \otimes \Phi)(\overline{U} \otimes U)(\sum_{i,j}^{} e_{ij} \otimes e_{ij})(\overline{U} \otimes U)^{*}  \nonumber \\
& = (\sum_{i,j}^{} \overline{U}e_{ij} \overline{U}^{*}) \otimes \Phi(Ue_{ij}U^{*})  \label{eqChoi_equiv} 
\end{align}
Since $\Phi$ is equivariant, for every unitary $U\in M_{n}(\mathbb{C})$ there exists $V \in  M_{N}(\mathbb{C})$ such that $\Phi$ 
satisfies Equation \eqref{eqgeneral}. From Equation (\ref{eqChoi_equiv}), $C_{\Phi}$ becomes,
\begin{equation}
\begin{aligned}
 C_{\Phi} & =  \sum_{i,j}^{} \overline{U} e_{ij} {\overline{U}}^{*} \otimes V \Phi(e_{ij}) V^{*}  \nonumber\\
& = (\overline{U} \otimes V)C_{\Phi}(\overline{U} \otimes V)^{*} \,. \label{eq_Choi}
\end{aligned}
\end{equation}
Hence, $C_{\Phi}$ commutes with $(\overline{U} \otimes V)$.\\
 We now prove the converse implication $(ii) \implies (i)$.	
Assume that assertion $(ii)$ holds. Then the previous computation shows that $(\overline{U} \otimes V)C_{\Phi}(\overline{U} \otimes V)^{*}$ 
is the Choi matrix of the map $Y\mapsto V\Phi(U^*YU)V^*$. By unicity of the Choi matrix, we get that for all $Y\in M_n(\Cbb)$,
\[\Phi(Y)=V\Phi(U^*YU)V^*\,.\]
Taking $X=U^*YU$ yields the result. Hence, the proof is completed.
\end{proof}
\end{theorem}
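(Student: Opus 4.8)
The plan is to leverage two structural facts. The first is the \emph{ricochet} invariance of the maximally entangled vector: for every unitary $U\in M_n(\Cbb)$ the operator $\overline{U}\otimes U$ commutes with $B_{n^2}$, equivalently $B_{n^2}=(\overline{U}\otimes U)\,B_{n^2}\,(\overline{U}\otimes U)^*$. The second is the bijectivity of the Choi correspondence $\Phi\mapsto C_\Phi=(i_n\otimes\Phi)(B_{n^2})$, which lets me recover a map from its Choi matrix. The identity I would actually establish in both directions is the intertwining relation $C_\Phi=(\overline{U}\otimes V)\,C_\Phi\,(\overline{U}\otimes V)^*$, which encodes assertion $(ii)$.

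For $(i)\Rightarrow(ii)$, I would start from $C_\Phi=(i_n\otimes\Phi)(B_{n^2})$ and insert the ricochet identity to get $C_\Phi=(i_n\otimes\Phi)\big((\overline{U}\otimes U)B_{n^2}(\overline{U}\otimes U)^*\big)$. Expanding $B_{n^2}=\sum_{i,j}e_{ij}\otimes e_{ij}$ and pushing the conjugation through the two tensor legs turns the first leg into $\overline{U}e_{ij}\overline{U}^*$ and the second into $Ue_{ij}U^*$; applying $i_n\otimes\Phi$ and using linearity yields $C_\Phi=\sum_{i,j}\overline{U}e_{ij}\overline{U}^*\otimes\Phi(Ue_{ij}U^*)$. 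At this point I invoke equivariance \eqref{eqgeneral} to replace $\Phi(Ue_{ij}U^*)$ by $V\Phi(e_{ij})V^*$, and factor the result as $(\overline{U}\otimes V)\big(\sum_{i,j}e_{ij}\otimes\Phi(e_{ij})\big)(\overline{U}\otimes V)^*$. Recognizing the bracketed sum as $C_\Phi$ itself gives the desired identity.

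For the converse $(ii)\Rightarrow(i)$, the key computation is to recognize the conjugate $(\overline{U}\otimes V)\,C_\Phi\,(\overline{U}\otimes V)^*$ as a Choi matrix. Expanding $\overline{U}e_{ij}\overline{U}^*$ in the matrix units and resumming, I would show that $\sum_{i,j}\overline{U}e_{ij}\overline{U}^*\otimes V\Phi(e_{ij})V^*$ equals $\sum_{k,l}e_{kl}\otimes V\Phi(U^*e_{kl}U)V^*$, using the identity $U^*e_{kl}U=\sum_{i,j}\overline{U_{ki}}\,U_{lj}\,e_{ij}$; this exhibits it as the Choi matrix of the map $Y\mapsto V\Phi(U^*YU)V^*$. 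Since assertion $(ii)$ forces this conjugate to coincide with $C_\Phi$, injectivity of the Choi correspondence gives $\Phi(Y)=V\Phi(U^*YU)V^*$ for all $Y$, and substituting $X=U^*YU$ recovers the equivariance relation.

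I expect the main obstacle to be the bookkeeping in the converse: correctly tracking complex conjugates and transposes when rewriting $\overline{U}e_{ij}\overline{U}^*$ so that, after resumming, the expression is manifestly the Choi matrix of the twisted map $Y\mapsto V\Phi(U^*YU)V^*$. The forward direction is essentially a direct substitution once the ricochet identity is in hand, so the real content lies in this identification step together with the appeal to uniqueness of $\Phi\mapsto C_\Phi$.
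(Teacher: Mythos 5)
Your proof is correct and follows essentially the same route as the paper's: the ricochet identity $B_{n^2}=(\overline{U}\otimes U)B_{n^2}(\overline{U}\otimes U)^*$ plus equivariance yields $C_\Phi=(\overline{U}\otimes V)C_\Phi(\overline{U}\otimes V)^*$ in the forward direction, and the converse recognizes the conjugated matrix as the Choi matrix of $Y\mapsto V\Phi(U^*YU)V^*$ and invokes injectivity of the Choi correspondence. The only differences are cosmetic: you spell out the resummation identity $U^*e_{kl}U=\sum_{i,j}\overline{U_{ki}}\,U_{lj}\,e_{ij}$ that the paper leaves implicit, and, exactly like the paper, you identify the commutator condition in $(ii)$ with the conjugation identity (an identification that is immediate only when $\overline{U}\otimes V$ is unitary, a point neither argument addresses).
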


\subsection{Characterization of unitarily equivariant maps}

The definition of equivariant maps given by Equation (\ref{eqgeneral}) is quite general in the sense that we do not put any constraint on the operator 
$V(U)$. 
It turns out that in the case of unitary invariance, the following theorem implies that $U\to V(U)$ might
be assumed to be a group morphism:

\begin{theorem}\label{theo_char_unitaryequivariant}
	Let $\Phi:M_{n}(\mathbb{C}) \rightarrow M_{N}(\mathbb{C})$ be a unitarily equivariant linear map. Then there exists a 
	\emph{unitary representation} $U\mapsto V(U)$ of the unitary group $\Ucal_n$ on $\Cbb^{N}$ such that for all $X\in M_n(\Cbb)$ and 
	all unitary $U\in\Ucal_n$,
	\[\Phi(UXU^*)=V(U)\,\Phi(X)\,V(U)^*\,.\]
\end{theorem}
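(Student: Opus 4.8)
The plan is to upgrade the \emph{a priori} non-multiplicative assignment $U\mapsto V(U)$ to a genuine group morphism by passing to the automorphism it induces on the algebra generated by the range of $\Phi$, and then lifting that action to a representation.

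First I would let $\Bcal\subseteq M_N(\Cbb)$ be the $*$-algebra generated by $\{\Phi(X):X\in M_n(\Cbb)\}$ (a finite-dimensional $C^*$-algebra, since $\Phi$ is self-adjoint), with unit the support projection $p$. For any unitary $U$ and any implementing unitary $V(U)$ furnished by unitary equivariance, the relation $\Phi(UXU^*)=V(U)\Phi(X)V(U)^*$ shows that $\operatorname{Ad}_{V(U)}$ sends the generators of $\Bcal$ into $\Bcal$, hence restricts to a $*$-automorphism $\alpha_U:=\operatorname{Ad}_{V(U)}|_{\Bcal}$. The first key observation is that $\alpha_U$ does not depend on the chosen $V(U)$: two implementing unitaries differ by a unitary in the commutant $\Bcal'$, which acts trivially on $\Bcal$. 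From this a one-line computation on generators gives $\alpha_{U_1U_2}=\alpha_{U_1}\alpha_{U_2}$, so $U\mapsto\alpha_U$ is a genuine homomorphism $\Ucal_n\to\operatorname{Aut}(\Bcal)$; it is continuous because $U\mapsto\Phi(UXU^*)$ is, and it factors through $\mathrm{PU}_n=\Ucal_n/U(1)$ since scalar matrices act trivially by conjugation on $M_n(\Cbb)$. The problem is thereby reduced to producing a continuous unitary representation $V:\Ucal_n\to\Ucal_N$ with $\operatorname{Ad}_{V(U)}|_{\Bcal}=\alpha_U$, for this at once yields $\Phi(UXU^*)=\alpha_U(\Phi(X))=V(U)\Phi(X)V(U)^*$.

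Next I would exploit the structure of finite-dimensional $C^*$-algebras. Decompose $p\,\Cbb^N=\bigoplus_k \Cbb^{d_k}\otimes\Cbb^{e_k}$ so that $\Bcal=\bigoplus_k M_{d_k}(\Cbb)\otimes\Id_{e_k}$ and $\Bcal'=\bigoplus_k\Id_{d_k}\otimes M_{e_k}(\Cbb)$. Since $\Ucal_n$ is connected and $\alpha$ is continuous, each $\alpha_U$ fixes every simple block (the induced permutation of blocks is a continuous map into a discrete group, equal to the identity at $U=\Id$). By the Skolem--Noether theorem, $\alpha_U$ acts on the $k$-th block as $\operatorname{Ad}_{v_k(U)}$ for some unitary $v_k(U)\in\mathcal{U}(d_k)$, determined up to a phase; multiplicativity of $\alpha$ then makes $U\mapsto v_k(U)$ a \emph{projective} unitary representation $\Ucal_n\to\mathrm{PU}(d_k)$.

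The crux is to lift each projective representation $v_k$ to a genuine one, coherently. Here I would use that $\mathrm{SU}_n$ is compact, simply connected and semisimple, so that the second cohomology of its Lie algebra vanishes and every continuous finite-dimensional projective unitary representation of $\mathrm{SU}_n$ lifts to a genuine one: this produces a continuous $\widetilde v_k:\mathrm{SU}_n\to\mathcal{U}(d_k)$ with $\operatorname{Ad}_{\widetilde v_k}=v_k|_{\mathrm{SU}_n}$. Since $v_k$ is trivial on the central $U(1)$ of scalars (because $\alpha$ factors through $\mathrm{PU}_n$), I would extend over the center by choosing an integer character $e^{ic_k\theta}$ of $U(1)$ matching $\widetilde v_k$ on the overlap $\mathrm{SU}_n\cap U(1)=\mathbb{Z}_n$; this is possible because the characters of $U(1)$ surject onto those of $\mathbb{Z}_n$, and it yields a genuine representation $V_k:\Ucal_n\to\mathcal{U}(d_k)$ implementing $v_k$ on all of $\Ucal_n$. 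Setting $V(U)=\big(\bigoplus_k V_k(U)\otimes\Id_{e_k}\big)\oplus\Id_{(1-p)\Cbb^N}$ gives the desired unitary representation, with $\operatorname{Ad}_{V(U)}|_{\Bcal}=\alpha_U$ by construction. The main obstacle is precisely this lifting — turning ``an implementing unitary exists for each $U$'' into a multiplicative family — and what makes it go through is the fact that the relevant phase cocycle is trivialized by the simple connectivity of $\mathrm{SU}_n$ together with the adjustable central character over the scalar torus.
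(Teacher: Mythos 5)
Your proof is correct, and its first half coincides with the paper's: you pass from the a priori non-multiplicative, non-unique implementing unitaries to the conjugation automorphisms $\alpha_U$ of the $C^*$-algebra generated by the range of $\Phi$, and your checks of well-definedness (via the commutant), multiplicativity on generators, and continuity are exactly the paper's argument for its map $L_U$. From there the two arguments diverge in execution. You decompose the algebra into simple blocks, apply Skolem--Noether blockwise to obtain continuous projective representations $\Ucal_n\to \mathrm{PU}(d_k)$, lift over $\mathrm{SU}_n$ by Bargmann's theorem (simple connectivity plus vanishing of the second Lie-algebra cohomology), and then extend to $\Ucal_n$ by hand, matching the central $\mathbb{Z}_n$-character of the lift with an integer character of the scalar circle --- a step that genuinely uses your observation that $\alpha$ factors through $\mathrm{PU}_n$. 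The paper instead keeps the algebra abstract (implementing unitaries unique up to central elements of a normalizer), trivializes the phase cocycle by a determinant renormalization in a neighbourhood of the identity, invokes the local-to-global correspondence for the simply connected group $S\Ucal_n$, and finally passes from $S\Ucal_n$ to $\Ucal_n$ by decomposing into irreducibles and citing that every irreducible representation of $S\Ucal_n$ extends to one of $\Ucal_n$. Your route buys concreteness and self-containedness: Skolem--Noether plus a covering/cohomology fact replace the normalizer discussion and the explicit determinant computation, and your explicit handling of the scalar circle makes it transparent that the resulting representation implements $\alpha_U$ for \emph{every} $U\in\Ucal_n$, a point the paper's irrep-extension step leaves implicit. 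The paper's route buys an explicit formula for the phase correction (the determinant trick) and keeps the group-theoretic input at the level of quotable standard statements rather than a bespoke central-character argument.
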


\begin{proof}
We denote by $\Acal(\Phi)$ the $C^*$-algebra generated by the image of $\Phi$: it is the subalgebra of $M_N(\Cbb)$ spanned by the identity 
operator $\mathds 1_N$ and products of $\Phi(X),$
  
$\Phi(X)^*$ for $X\in M_n(\Cbb)$. Given $U\in\mathcal U_n$, we define on $\Acal(\Phi)$ an endomorphism as follows.

We fix a basis of $\Acal(\Phi)$ as a vector space, 
$(\Phi(X_{j_{1}})\cdots\Phi(X_{j_l}))$ where the $j_i$s run in some finite set $J$. The endomorphism $L_U$
is defined by extending by linearity the formula
 \[L_U\left(\Phi(X_{j_{1}})\cdots\Phi(X_{j_{l}})\right)=\Phi(UX_{j_{1}}U^*)\cdots\Phi(UX_{j_{l}}U^*)\,.\]
 It can be noted that thanks to the unitarity assumption, this equation becomes
  \[L_U\left(\Phi(X_{j_{1}})\cdots\Phi(X_{j_{l}})\right)=V\Phi(X_{j_{1}})\cdots\Phi(X_{j_{l}})V^*\,,\]
  therefore for any $Y\in \Acal(\Phi)$, the map is actually 
 \[L_U\,:\,Y\in\Acal(\Phi)\mapsto V(U)\,Y\,V(U)^*\,,\]
 therefore this is actually a $*$-automorphism of $\Acal(\Phi)$.
 
 Let us record two other important properties of 
 $L:\mathcal U_n\to \operatorname{Aut}\left(\Acal(\Phi)\right)\,,\,U\mapsto L_U$. First of all, $L$ is continuous, as any linear map between finite 
 dimensional spaces is continuous and so is the multiplication of operators. Secondly, $L$ is actually a group morphism. Indeed, for any 
 $U_1,U_2\in\mathcal U_n$ and any $X_1,...,X_k\in M_n(\Cbb)$,
 \begin{align*}
  L_{U_1U_2}\left(\Phi(X_1)\cdots\Phi(X_k)\right) 
  & = V(U_1)\,\Phi(U_2X_1U_2^*)\cdots\Phi(U_2X_kU_2^*)\,V(U_1)^*  \\
  & = L_{U_1}\left(\,\Phi(U_2X_1U_2^*)\cdots\Phi(U_2X_kU_2^*)\,\right) \\
  & = L_{U_1}\circ L_{U_2}\left(\Phi(X_1)\cdots\Phi(X_k)\right)\,.
 \end{align*}
So, $L:\mathcal U_n\to \operatorname{Aut}\left(\Acal(\Phi)\right)\,,\,U\mapsto L_U$ is a continuous group morphism.
 
It follows from the classification of
automorphisms of a finite-dimensional $C^*$-algebra \cite[Proposition 2.2.6]{goodman2012coxeter} that there exists $W=W(U)$ in $M_{N}(\mathbb{C})$
that normalizes $\Acal(\Phi)$ such that 
$L_U(X)=WXW^*$ and this $W$ is unique up to a central element in the algebra generated by $\Acal(\Phi)$ and the 
normalizer of the commutant of $\Acal(\Phi)$ (which we will call $\operatorname{NC} \Acal(\Phi)$ for the purpose of this proof). 
 
So, the map
$L:\mathcal U_n\to \operatorname{Aut}\left(\Acal(\Phi)\right)\,,\,U\mapsto L_U$
gives rise to a continous application
$\mathcal U_n\to\mathcal{U}_N/\mathcal{Z}\operatorname{\operatorname{NC}} \Acal(\Phi)\,,\,U\mapsto W$
which implements $L_U$
(the continuity follows from the fact that a continuous bijection between two compact sets is bi-continuous).

In order to complete the proof, we have to prove that this map can be lifted continuously to a map
$\mathcal U_n\to\mathcal{U}_N\,,\,U\mapsto W$ that generates the same $L_U$.
This follows from the fact that, by construction, $W$ can be taken as an element of 
$\operatorname{NC} \Acal(\Phi)$, so, given a minimal projection $P$ in $\mathcal{Z}\operatorname{NC} \Acal(\Phi)$, it is enough to 
show that there exists a continuous choice for $PW$. 

There exists $\theta(U_{1},U_{2}) \in \mathbb{R}$ such that 
\begin{align}\label{eq28}
PW(U_{1})W(U_{2}) & = e^{\iota \theta(U_{1},U_{2}) } PW(U_{1}U_{2})    
\end{align}

Note that it follows from the above that the phase
$e^{\iota \theta(U_{1},U_{2}) }$ is a continuous function of $U_1,U_2$.
Let us prove that $PW$ can undergo a phase modification 
that will turn it into a special unitary representation.
First of all, without loss of generality, we may assume that $\theta(I_n, I_n)=0$, at the possible expense of replacing
$PW$ by $e^{\iota \theta(I_n, I_n) }PW$. 
Then, by continuity looking in a vicinity of the identity, taking the determinant, we get that
$$\det PW(U_1)\det PW(U_2)=e^{N\,\iota \theta(U_{1},U_{2}) }\det PW(U_1U_2)\,,$$
in other words
$$e^{\iota \theta(U_{1},U_{2}) }= (\det PW(U_1)\det PW(U_2) /\det PW(U_1U_2))^{1/N}\,,$$
where the $N$-th root is obtained with the prinicipal branch of the logarithm, which is well defined in a neighbourhood of $0$.

So, we may rewrite Equation \eqref{eq28} as
$$PW(U_{1})PW(U_{2})  = (\det PW(U_1)\det PW(U_2) /\det PW(U_1U_2))^{1/N} PW(U_{1}U_{2}),$$
or equivalently,
$$PW(U_{1})\det PW(U_1)^{-1/N}PW(U_{2})\det PW(U_2)^{-1/N}  = PW(U_{1}U_{2})
(\det PW(U_1U_2))^{-1/N}$$
Therefore, setting
$$P\tilde{W}(U_1)=PW(U_{1})\det W(U_1)^{-1/N},$$
and doing the same for each minimal projection  $P$ in $\mathcal{Z}\operatorname{NC} \Acal(\Phi)$
we see that, in a neighbourhood of $0$,
$$\tilde{W}(U_1)\tilde{W}(U_2)=\tilde{W}(U_1U_2).$$
It is known that a finite dimensional representation of a Lie group is in one to one correspondance with that of its Lie algebra, and therefore that it suffices to define it on a neighbourhood of identity, in the simply connected case.
So, this proves that $\tilde{W}$ extends on $S\mathcal{U}_{n}$ to a representation of $S\mathcal{U}_{n}$
(note that the above is a modification of Bargmann's results \cite{Bargmann:1954gh, article8}).  
We can therefore decompose $\pi$ into irreducible representations: 
\[\mathbb{C}^{N}=\bigoplus_{\lambda}\, E_\lambda\otimes \mathbb{C}^{N_\lambda}\,,\]
where $E_\lambda$ are irreducible representations of $S\Ucal_n$ appearing in the unitary representation $\pi$ with multiplicity $N_\lambda$. We denote by $\pi_\lambda$ the unitary representation of $S\Ucal_n$ on $E_\lambda$ induced by $\pi$.

By \cite[Proposition 22.2]{article11}, corresponding to every irreducible representation $\pi_{\lambda}$ of $S\Ucal_n$, there exists an irreducible representation $\tilde{\pi}_{\lambda}$ of $\Ucal_n$ such that $\tilde{\pi}_{\lambda}|_{S\Ucal_n}=\pi_{\lambda}$. Hence, $\tilde{\pi}=\bigoplus_{\lambda}N_{\lambda}\tilde{\pi}_{\lambda}$ is a unitary representation of $\Ucal_n$.
This complete the proof. 
\end{proof}

Let us note that the last part of the proof can be slightly simplified in case $\Acal(\Phi)= M_{N}(\mathbb{C})$.
Indeed, in this case it is enough to replace $\mathcal{U}_N/\mathcal{Z}\operatorname{NC} \Acal(\Phi)\,,\,U\mapsto W$
by $\mathcal{PU}_N$ (the projective unitary group)
and we can refer to Bargmann's theorem.
It is also interesting to note that no continuity is needed in the assumption, it is granted automatically 
by the theorem 
(together with rationality and analyticity) when one replaces $V$ by $W$.

The next corollary builds a bridge between unitarily equivariant maps and $(a,b)$-unitarily equivariant maps. 
It justifies the computation of the latter in the next section.

\begin{corollary}\label{coro_main}
	 Let $\Phi\in \Bcal(M_n(\Cbb),M_N(\Cbb))$ be a unitarily equivariant map. Then there exist 
	 a finite sequence of pair $(a_i,b_i)$ of natural integers, a partial isometric 
	 map $W:\Cbb^{N}\to\oplus_i (\Cbb^n)^{\otimes a_i}\otimes (\Cbb^n)^{\otimes b_i}$ and \\
	 an  $\oplus_i(a_{i},b_{i})$-unitarily equivariant map $\Psi$ on $M_n(\Cbb)$ such that for all $X\in M_n(\Cbb)$,
	\begin{equation}\label{eq_coro_main}
	W\Phi(X)W^*=\Psi(X)\,.
	\end{equation}
\end{corollary}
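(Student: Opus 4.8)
The plan is to leverage \Cref{theo_char_unitaryequivariant}, which upgrades the equivariance datum of $\Phi$ into a genuine unitary representation $V\colon\Ucal_n\to\Ucal_N$ satisfying $\Phi(UXU^*)=V(U)\,\Phi(X)\,V(U)^*$ for all $X$ and all unitary $U$. With an honest (continuous, finite-dimensional) unitary representation in hand, the problem becomes purely representation-theoretic: the goal is to realize $V$ as a subrepresentation of a finite direct sum of the \emph{mixed tensor} representations $R_i(U)=\overline{U}^{\otimes a_i}\otimes U^{\otimes b_i}$ that define $(a_i,b_i)$-unitary equivariance, and to transport $\Phi$ along the corresponding intertwiner.

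First I would decompose $V$ into irreducibles. Since $\Ucal_n$ is compact and $V$ is continuous and finite-dimensional, $\Cbb^N$ splits as a finite orthogonal direct sum of irreducible subrepresentations of $\Ucal_n$, counted with multiplicity. The key representation-theoretic input is the classical fact that every irreducible continuous representation of $\Ucal_n$ --- indexed by a signature $\lambda_1\geq\cdots\geq\lambda_n$ with $\lambda_j\in\Zbb$ --- occurs as a subrepresentation of some mixed tensor space, i.e.\ inside the representation $\overline{U}^{\otimes a}\otimes U^{\otimes b}$ acting on $(\Cbb^n)^{\otimes a}\otimes(\Cbb^n)^{\otimes b}$ (with the conjugate action on the first $a$ factors), for suitable $a,b\in\Nbb$. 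Indeed, polynomial irreducibles sit in pure tensor powers of $\Cbb^n$, and the general case is reached by tensoring with powers of the conjugate representation $\overline{\Cbb^n}\cong(\Cbb^n)^*$. Applying this to each irreducible summand of $V$ yields, for each summand, a pair $(a_i,b_i)$ and an isometric $\Ucal_n$-equivariant embedding of that summand into $T_i:=(\Cbb^n)^{\otimes a_i}\otimes(\Cbb^n)^{\otimes b_i}$ carrying the representation $R_i$.

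Next I would assemble these embeddings into a single isometry $W\colon\Cbb^N\to\bigoplus_i T_i$ intertwining $V$ with $R:=\bigoplus_i R_i$, that is $WV(U)=R(U)W$ for all unitary $U$; being an isometry, $W$ is in particular a partial isometry, as required, and the number of summands is finite, so the sequence $(a_i,b_i)$ is finite. Defining $\Psi(X):=W\Phi(X)W^*$ gives \eqref{eq_coro_main} immediately. It remains to check equivariance: taking adjoints of the intertwining relation yields $V(U)^*W^*=W^*R(U)^*$, hence
\[
\Psi(UXU^*)=W\Phi(UXU^*)W^*=WV(U)\Phi(X)V(U)^*W^*=R(U)\,W\Phi(X)W^*\,R(U)^*=R(U)\,\Psi(X)\,R(U)^*.
\]
Since $R(U)=\bigoplus_i\overline{U}^{\otimes a_i}\otimes U^{\otimes b_i}$, this is exactly the assertion that $\Psi$ is $\bigoplus_i(a_i,b_i)$-unitarily equivariant.

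The main obstacle --- and the only place where genuine content enters beyond bookkeeping --- is the embedding of an arbitrary irreducible representation of $\Ucal_n$ into a mixed tensor power of the standard and conjugate representations; everything else (the decomposition into irreducibles, the assembling of $W$, and the final equivariance computation) is routine once the honest representation $V$ supplied by \Cref{theo_char_unitaryequivariant} is available. I would be careful to phrase the embedding step so that it covers signatures with negative entries, which is precisely what forces the appearance of the conjugate factors $\overline{U}^{\otimes a_i}$ and thus of nontrivial $a_i$.
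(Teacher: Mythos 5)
Your proposal is correct and follows essentially the same route as the paper: invoke \Cref{theo_char_unitaryequivariant} to obtain an honest unitary representation $V$, embed it via an intertwining (partial) isometry $W$ into a finite direct sum of mixed tensor representations $\oplus_i\,\overline{U}^{\otimes a_i}\otimes U^{\otimes b_i}$ (the paper simply cites the representation theory of $\Ucal_n$ for this, where you spell out the signature/irreducible-embedding argument), and then define $\Psi=W\Phi(\cdot)W^*$ and verify equivariance by the same conjugation computation.
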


\begin{proof}
This is a direct consequence of the representation theory of the unitary group. Indeed, for any representation, there exist a finite sequence of pairs $(a_i,b_i)$ of natural integers such that this representation appears in the unitary representation~(see \cite{sepanski2007compact} for instance)
\[U\mapsto \oplus_i \overline{U}^{\otimes a_i}\otimes U^{\otimes b_i}\,.\]
It means that there exists a partial isometry $W$ from $\Cbb^N$ to $\oplus_i (\Cbb^n)^{\otimes a_i}\otimes (\Cbb^n)^{\otimes b_i}$ such that for all $U\in \Ucal_n$,
\[W\,V(U)W^*=P \, \oplus_i (\overline{U}^{\otimes a_i}\otimes U^{\otimes b_i})\,P\,,\]
where $P$ is the orthogonal projection on some supspace of $\oplus_i((\Cbb^n)^{\otimes a_i}\otimes (\Cbb^n)^{\otimes b_i})$ 
invariant by $\oplus_i (\overline{U}^{\otimes a_i}\otimes U^{\otimes b_i})$ for all $U\in\Ucal_n$. We can then define the map 
$\Psi : X\mapsto W\,\Phi(X)\,W^*$ and it can be readily checked that it is an $\oplus_i(a_{i},b_{i})$-unitarily equivariant map.
\end{proof}

\subsection{Characterization of $(a,b)$-unitarily equivariant maps}\label{sectSchur}

We now deal with the question of characterizing $(a,b)$-unitarily equivariant maps. More explicitly, 
let $a,b \in \mathbb{N}$ and $n \geq 2$. Then, what are all the linear maps 
$$\Phi: M_{n}(\mathbb{C}) \rightarrow M_{n}(\mathbb{C})^{\otimes a} \otimes M_{n}(\mathbb{C})^{\otimes b}$$
such that for every unitary $U \in M_{n}(\mathbb{C})$, $\Phi$ satisfy Equation \eqref{equivariant}. By Lemma \ref{vspace}(1) the set of all 
$(a,b)$-unitarily equivariant maps is a vector space. Characterizing this vector space is then equivalent 
to exhibiting one of its basis. In order to do that, we need some basic results from group representation theory and more precisely the 
Schur-Weyl duality Theorem \cite[Theorem 8.2.10]{ceccherini-silberstein_scarabotti_tolli_2010}. More detail can be found 
in \cite[Chapters 7,8]{ceccherini-silberstein_scarabotti_tolli_2010}.\\

Let us define two unitary representations $\sigma_{k}$ and $\rho_{k}$ on $(\mathbb{C}^{n})^{\otimes k}$, with $a+b=k$, of the symmetric 
group $S_{k}$ and the unitary group $\Ucal_n$ respectively. For all $v_1,...,v_k\in\Cbb^n$, $\pi\in S_k$ and $U\in\Ucal_n$, they are defined as
	\begin{align*}
        & \sigma_k(\pi)\,(v_1\otimes\cdots\otimes v_k)=(v_{\pi^{-1}(1)}\otimes \cdots \otimes v_{\pi^{-1}(k)})\,, \\
        & \rho_k(U)\,(v_1\otimes\cdots\otimes v_k)=(Uv_1\otimes\cdots\otimes Uv_k)\,.
	\end{align*}
We denote by $\sigma_k(\mathbb{C} [S_k])$ (resp. $\rho_k(\mathbb{C}[\Ucal_n])$)
 
the $*$-algebra generated by the representation $\sigma_k$ (resp. $\rho_k$). That is,
\begin{align*}
 & \sigma_k(\mathbb{C}[S_k])=\{\sigma_k(\pi)\,;\,\pi\in S_k\}''=\left\{\sum_{\pi\in S_k}\,f(\pi)\sigma_k(\pi)\quad;\quad f:S_k\to\Cbb\right\}\,,\\
 & \rho_k(\mathbb{C}[\Ucal_n])=\{\rho_k(U)\,;\,U\in \Ucal_n\}''\,,
\end{align*}
where the prime symbol denotes the commutant of a set (we will not need the explicit formula for the elements of $\rho_k(\Ucal_n)$). Then 
Schur-Weyl duality Theorem asserts that $\sigma_k(\mathbb{C}[S_k])$ and  $\rho_k(\mathbb{C}[\Ucal_n])$) are the commutant of each other.
 
In other words, for any operator $C\in M_n(\Cbb)^{\otimes k}$,
\begin{equation}\label{eq_schur_weyl}
 [U^{\otimes k},C]=0\quad \forall \quad U\in\Ucal_n\quad\text{iff}\quad C\in\sigma_k(S_k)\,.
\end{equation}
We refer to \cite[Theorem 8.2.8]{ceccherini-silberstein_scarabotti_tolli_2010} for instance for a presentation of this Theorem. With this 
result in hand, we can now prove one of the main results of this paper.	
	
\begin{theorem}\label{Ueq}
	Let $a,b \in \NN$ and $\Phi: M_{n}(\mathbb{C}) \rightarrow M_{n}(\mathbb{C})^{\otimes a} \otimes M_{n}(\mathbb{C})^{\otimes b}$. 
	Then the following assertions are equivalent.
	
	\begin{enumerate}[label=(\roman*)]\label{a,bcase} 
	\item $\Phi$ is $(a,b)$-unitarily equivariant. 
		\item $[C_{\Phi} , \overline{U}^{\otimes a+1} \otimes U^{\otimes b}]=0$ for all $U\in\Ucal_n$, where $C_\Phi$ is the Choi matrix of $\Phi$.
		\item There exists $f:S_{k+1}\to\Cbb$ such that
		\begin{equation}\label{eq_theo_Ueq}
		 C_{\Phi}=\sum_{\pi\in S_{k+1}}\,f(\pi)\,\left(\theta_{n}^{\otimes a+1}\otimes i_n^{\otimes b}\right)\left[\sigma_{k+1}(\pi)\right]\,,
		\end{equation}
        where $\theta_{n}$ is the transpose map on $M_n(\Cbb)$.
	\end{enumerate}
	
 \end{theorem}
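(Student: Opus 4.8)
The plan is to prove the cycle of equivalences by treating $(i)\Leftrightarrow(ii)$ via \Cref{V(U)equiv} and $(ii)\Leftrightarrow(iii)$ via the Schur--Weyl duality \eqref{eq_schur_weyl}, the passage between the two being effected by a transpose trick. Throughout I write $k=a+b$, so that $C_\Phi\in M_n(\Cbb)^{\otimes k+1}$, the first tensor leg being the ``input'' leg and the remaining $k$ legs the output legs of $\Phi$.

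For $(i)\Leftrightarrow(ii)$ I would specialize \Cref{V(U)equiv}. By definition $\Phi$ is $(a,b)$-unitarily equivariant precisely when it satisfies \eqref{eqgeneral} with the explicit choice $V(U)=\overline{U}^{\otimes a}\otimes U^{\otimes b}$. Inspecting the proof of \Cref{V(U)equiv}, one sees that for a \emph{fixed} pair $(U,V)$ the relation $\Phi(UXU^*)=V\Phi(X)V^*$ holds if and only if $[C_\Phi,\overline{U}\otimes V]=0$; both directions of that proof use only this single fixed $V$. Since $\overline{U}\otimes V(U)=\overline{U}\otimes\overline{U}^{\otimes a}\otimes U^{\otimes b}=\overline{U}^{\otimes a+1}\otimes U^{\otimes b}$, specializing gives exactly condition $(ii)$.

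The substance of the argument is $(ii)\Leftrightarrow(iii)$. Here I would introduce the linear involution $\Theta:=\theta_n^{\otimes a+1}\otimes i_n^{\otimes b}$ on $M_n(\Cbb)^{\otimes k+1}$, which is an involution because $\theta_n^2=i_n$, so that $\Theta^{-1}=\Theta$. The key computation is the single-leg transpose identity: for every $X\in M_n(\Cbb)$ and every unitary $U$,
\[\theta_n(\overline{U}\,X\,\overline{U}^*)=U\,\theta_n(X)\,U^*\,,\]
which follows from $(\overline{U})^t=U^*$ together with the anti-multiplicativity $(ABC)^t=C^tB^tA^t$ of the transpose. Evaluating $\Theta$ on a simple tensor $C=c_1\otimes\cdots\otimes c_{k+1}$ and using that conjugation by $\overline{U}^{\otimes a+1}\otimes U^{\otimes b}$ acts leg by leg, the identity above applies to each of the first $a+1$ legs while the last $b$ legs (carrying $i_n$ and conjugation by $U$) are unchanged; extending by linearity yields, for every $C$,
\[\Theta\bigl((\overline{U}^{\otimes a+1}\otimes U^{\otimes b})\,C\,(\overline{U}^{\otimes a+1}\otimes U^{\otimes b})^*\bigr)=U^{\otimes k+1}\,\Theta(C)\,(U^{\otimes k+1})^*\,.\]
Because $\Theta$ is a bijection, condition $(ii)$ is thus equivalent to $[\Theta(C_\Phi),U^{\otimes k+1}]=0$ for all $U\in\Ucal_n$. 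Applying the Schur--Weyl duality \eqref{eq_schur_weyl} to the operator $\Theta(C_\Phi)\in M_n(\Cbb)^{\otimes k+1}$, this is in turn equivalent to $\Theta(C_\Phi)=\sum_{\pi\in S_{k+1}}f(\pi)\,\sigma_{k+1}(\pi)$ for some $f:S_{k+1}\to\Cbb$. Finally, applying $\Theta^{-1}=\Theta$ to both sides and using linearity gives $C_\Phi=\sum_{\pi\in S_{k+1}}f(\pi)\,\Theta[\sigma_{k+1}(\pi)]$, which is exactly \eqref{eq_theo_Ueq}, i.e. assertion $(iii)$.

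The only delicate point is the transpose intertwining identity and the bookkeeping of which tensor legs carry $\overline{U}$ versus $U$. Once $\theta_n(\overline{U}X\overline{U}^*)=U\theta_n(X)U^*$ is in hand, the conjugate-unitary action $\overline{U}^{\otimes a+1}\otimes U^{\otimes b}$ is turned into the genuine unitary action $U^{\otimes k+1}$, and Schur--Weyl duality applies verbatim; everything else is linearity and the fact that $\Theta$ is an involution, so I expect no further obstacles.
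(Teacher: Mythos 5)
Your proposal is correct and follows essentially the same route as the paper: the equivalence $(i)\Leftrightarrow(ii)$ by specializing Theorem~\ref{V(U)equiv} to the fixed unitary $V(U)=\overline{U}^{\otimes a}\otimes U^{\otimes b}$, and $(ii)\Leftrightarrow(iii)$ by the transpose intertwining identity $\theta_n(\overline{U}X\overline{U}^*)=U\theta_n(X)U^*$ applied leg by leg, followed by Schur--Weyl duality \eqref{eq_schur_weyl}. The only cosmetic difference is that you handle both directions of $(ii)\Leftrightarrow(iii)$ simultaneously by exploiting that $\theta_n^{\otimes a+1}\otimes i_n^{\otimes b}$ is an involution, whereas the paper proves $(ii)\Rightarrow(iii)$ explicitly and notes the converse is straightforward.
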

 
\begin{proof}
The equivalence of $(i)$ and $(ii)$ is clear from Theorem \ref{V(U)equiv}. \\
$(ii) \Rightarrow (iii)$.\\
Assume that $C_{\Phi}$ commutes with $(\overline{U}^{\otimes a+1} \otimes U^{\otimes b})$ for all $U\in\Ucal_n$, that is,
\begin{align}\label{commute_condition}
& (\overline{U}^{\otimes a+1} \otimes U^{\otimes b})C_{\Phi}(\overline{U}^{\otimes a+1} \otimes U^{\otimes b})^{*}  = C_{\Phi}
\end{align}
Remark that the  transpose map $\theta_{n} \in \mathcal{B}(M_{n}(\mathbb{C}))$ is $(1,0)$-unitarily equivariant, so that for all $U\in\Ucal_n$ 
and all $C\in M_n(\Cbb)^{\otimes a+b+1}$,
\begin{align}\label{transpose_eq}
(\theta_{n}^{\otimes a+1}\otimes i_n^{\otimes b})(\overline{U}^{\otimes a+1} \otimes U^{\otimes b}\,C\,(\overline{U}^{\otimes a+1} \otimes U^{\otimes b})^*) & =U^{\otimes k+1}\,[(\theta_{n}^{\otimes a+1}\otimes i_n^{\otimes b})C]\,{U^*}^{\otimes k+1}\,.
\end{align}
where  $k=a+b$.
Applying Equations (\ref{commute_condition}) and (\ref{transpose_eq}), we get:
\[
(\theta_{n}^{\otimes a+1} \otimes i_{n}^{\otimes b})[C_{\Phi}]=U^{\otimes k+1}\,(\theta_{n}^{\otimes a+1} \otimes i_{n}^{\otimes b})[C_{\Phi}]\, {U^{*}}^{\otimes k+1}\,.
\]
Consequently, $(\theta_{n}^{\otimes a+1} \otimes i_{n}^{\otimes b})[C_{\Phi}] \in \rho_k(\Ucal_n)'$. By the Schur-Weyl duality given 
in \Cref{eq_schur_weyl}, we obtain that $(\theta_{n}^{\otimes a+1} \otimes i_{n}^{\otimes b})[C_{\Phi}] \in \sigma_{k+1}(\mathbb{C}[S_{k+1}])$, 
which implies (iii).\\
$(iii) \Rightarrow (ii)$ is straightforward using the converse part of Schur-Weyl duality and the previous computation.
\end{proof}

Clearly by point (iii) in Theorem \ref{Ueq}, the set of $(a,b)$-unitarily equivariant maps is isomorphic as 
a vector space to the space $\sigma_{k+1}(\mathbb{C}[S_{k+1}])$. Therefore we get the straightforward corollary:
\begin{corollary}\label{coro_Uab}
A basis of $(a,b)$-unitarily equivariant maps is given by the maps $\{\Phi_\pi : \pi \in S_{k+1}\}$, where $k=a+b$, with the corresponding Choi matrices:
\begin{equation}\label{eq_coro_Uab}
 C_{\Phi_\pi}=(\theta_{n}^{\otimes a+1}\otimes i_n^{\otimes b})[\sigma_{k+1}(\pi)]\,,
\end{equation}
where the matrix $\sigma_{k+1}(\pi)$ acts on vectors $(v_1\otimes\cdots \otimes v_{k+1})$ as
\begin{equation}\label{eq_coro_Uab2}
 \sigma_{k+1}(\pi)\,(v_1\otimes\cdots\otimes v_{k+1})=v_{\pi^{-1}(1)} \otimes \cdots \otimes v_{\pi^{-1}(k+1)}\,.
\end{equation}
In particular, when $n\geq k+1$, then the dimension of this vector space is $(k+1)!$.
\end{corollary}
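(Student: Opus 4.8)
The plan is to leverage the vector-space isomorphism recorded just before the statement: by \Cref{Ueq} the Choi correspondence $\Phi\mapsto C_\Phi$ identifies the space of $(a,b)$-unitarily equivariant maps with the linear space of their Choi matrices, and point (iii) of that theorem says this space is exactly $(\theta_n^{\otimes a+1}\otimes i_n^{\otimes b})[\sigma_{k+1}(\Cbb[S_{k+1}])]$. Since $\theta_n$ is a linear involution of $M_n(\Cbb)$, the twist $\theta_n^{\otimes a+1}\otimes i_n^{\otimes b}$ is an invertible linear map on $M_n(\Cbb)^{\otimes(k+1)}$; composing it with the (bijective) Choi correspondence gives a vector-space isomorphism between the $(a,b)$-unitarily equivariant maps and $\sigma_{k+1}(\Cbb[S_{k+1}])$, under which $\Phi_\pi$ corresponds to $\sigma_{k+1}(\pi)$.

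Given this, the first step is the spanning claim. By definition $\sigma_{k+1}(\Cbb[S_{k+1}])$ is the linear span of $\{\sigma_{k+1}(\pi):\pi\in S_{k+1}\}$, so pulling back through the isomorphism shows that every $(a,b)$-unitarily equivariant map is a linear combination of the $\Phi_\pi$ defined by \eqref{eq_coro_Uab}. Hence $\{\Phi_\pi\}$ is a generating family in all cases.

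The second, and only nontrivial, step is the dimension count when $n\ge k+1$. Under the isomorphism above, linear independence of $\{\Phi_\pi\}$ is equivalent to linear independence of the permutation operators $\{\sigma_{k+1}(\pi)\}$ in $M_{n^{k+1}}(\Cbb)$, so I would prove the latter directly. Suppose $\sum_{\pi}c_\pi\,\sigma_{k+1}(\pi)=0$. Since $n\ge k+1$, I may set $v_j=e_j$ for $j=1,\dots,k+1$, a family of pairwise distinct standard basis vectors of $\Cbb^n$, and evaluate the relation on $v_1\otimes\cdots\otimes v_{k+1}$. By \eqref{eq_coro_Uab2} this yields $\sum_\pi c_\pi\,(e_{\pi^{-1}(1)}\otimes\cdots\otimes e_{\pi^{-1}(k+1)})=0$; because the indices are distinct, the tensors $e_{\pi^{-1}(1)}\otimes\cdots\otimes e_{\pi^{-1}(k+1)}$ are pairwise distinct elements of the standard basis of $(\Cbb^n)^{\otimes(k+1)}$ as $\pi$ ranges over $S_{k+1}$, hence linearly independent, which forces $c_\pi=0$ for all $\pi$. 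Thus the $(k+1)!$ maps $\Phi_\pi$ are linearly independent and form a basis, giving the dimension $(k+1)!$.

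The main obstacle is really just this independence argument; everything else is a formal consequence of \Cref{Ueq} and the invertibility of the transpose twist. I would also remark that the hypothesis $n\ge k+1$ is essential, so that ``basis'' should be read as a generating family in general: for $n<k+1$ the evaluation trick breaks down and nontrivial relations among the $\sigma_{k+1}(\pi)$ appear (for instance, total antisymmetrizers vanish), so the $\Phi_\pi$ then only span and the dimension drops below $(k+1)!$.
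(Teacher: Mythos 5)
Your proposal is correct and takes essentially the same route as the paper, which simply notes that point (iii) of \Cref{Ueq} identifies the space of $(a,b)$-unitarily equivariant maps with $\sigma_{k+1}(\mathbb{C}[S_{k+1}])$ via the Choi correspondence and the transpose twist, and states the corollary as immediate. Your evaluation argument on $e_1\otimes\cdots\otimes e_{k+1}$ (and the remark that relations such as the vanishing antisymmetrizer appear when $n<k+1$) merely makes explicit the linear-independence step that the paper leaves unstated.
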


We discuss the graphical representation of this basis in the next section.

\section{Graphical representation of the Choi matrix of $(a,b)$-unitarily equivariant maps}\label{sect3}
 
In this section we study the graphical representation of $(a,b)$-unitarily equivariant maps for any 
$a,b \in \mathbb{N}$. This gives a visual and very convenient method to compute them. We illustrate this with the $(1,1)$-unitarily 
equivariant maps from \cite{COLLINS2018398}.\\
 
The following table in Figure \ref{graph_table} represents the graphical representation of operators defined on Hilbert spaces, their tensor product and how the operations of 
transpose, multiplication of operators perform graphically. More detail about the graphical calculus can be found in 
\cite{Wood:2015:TNG:2871422.2871425,Collins2010}.
\begin{figure}[h!]
\begin{center}
\begin{tikzpicture}[scale=0.7, transform shape]
\usetikzlibrary{calc}
\node (rect) at (0,0) [draw,thick,minimum width=8cm,minimum height=12cm] (a) {};
\draw[black] (0,6) -- (0,-6);
\draw[black] (-4,5) -- (4,5);
\draw[black] (-4,3) -- (4,3);
\draw[black] (-4,1) -- (4,1);
\draw[black] (-4,-1) -- (4,-1);
\draw[black] (-4,-3) -- (4,-3);
\draw[black] (-4,-3) -- (4,-3);

\draw[black] (1,4.5) -- (2,4.5);
\draw[black] (2,4.9) -- (2,4);
\draw[black] (2,4.9) -- (3,4.5);
\draw[black] (2,4) -- (3,4.5);

\draw[black] (2,3.9) -- (2,3.1);
\draw[black] (2,3.9) -- (1,3.5);
\draw[black] (2,3.1) -- (1,3.5);
\draw[black] (2,3.5) -- (3,3.5);

\draw[black] (1,2) -- (1.5,2);
\draw[black] (1.5,2.5) -- (1.5,1.5);
\draw[black] (1.5,1.5) -- (2.5,1.5);
\draw[black] (2.5,1.5) -- (2.5,2.5);
\draw[black] (2.5,2.5) -- (1.5,2.5);
\draw[black] (2.5,2) -- (3,2);

\draw[black] (1-0.3,0.5) -- (1.5-0.3,0.5);
\draw[black] (1-0.3,-0.5) -- (1.5-0.3,-0.5);
\draw[black] (1.5-0.3,0.8) -- (3-0.3,0.8);
\draw[black] (1.5-0.3,-0.8) -- (3-0.3,-0.8);
\draw[black] (3-0.3,-0.8) -- (3-0.3,0.8);
\draw[black] (3-0.3,0.5) -- (3.5-0.3,0.5);
\draw[black] (3-0.3,-0.5) -- (3.5-0.3,-0.5);
\draw[black] (1.5-0.3,0.8) -- (1.5-0.3,-0.8);

\draw[black] (1,-2) -- (1.5,-2);
\draw[black] (2,-2) -- (2.5,-2);
\draw[black] (3,-2) -- (3.5,-2);
\draw[black] (1.5,-1.75) -- (1.5,-2.25);
\draw[black] (1.5,-1.75) -- (2,-1.75);
\draw[black] (2,-1.75) -- (2,-2.25);
\draw[black] (2,-2.25) -- (1.5,-2.25);

\draw[black] (2.5,-1.75) -- (3,-1.75);
\draw[black] (2.5,-2.25) -- (3,-2.25);
\draw[black] (2.5,-1.75) -- (2.5,-2.25);
\draw[black] (3,-1.75) -- (3,-2.25);

\draw[black] (1.5,-4) -- (2.5,-4);
\draw[black] (1.5,-4) -- (1.5,-5);
\draw[black] (2.5,-5) -- (1.5,-5);
\draw[black] (2.5,-5) -- (2.5,-4);
\draw[black] (1.5, -4.5) .. controls (1,-4.5) and (0.5,-3).. (3,-3.5);
\draw[black] (2.5, -4.5) .. controls (3.5,-4.5) and (3.5,-6).. (1,-5.5);

\draw (2,5.5)node  {Graphical representation};
\draw (-2,5.5)node  {Vectors/Operators};
 
\draw (-2,4.5)node  {{$|v \rangle \in \mathbb{C}^{n}$}};
\draw (2.3,4.5)node  {{$|v \rangle $}};
\draw (-2,3.5)node  {{$\langle v| \in (\mathbb{C}^{n})^{*}$}};
\draw (1.7,3.5)node  {{$\langle v| $}};
\draw (-2,2)node  {{$T \in \mathcal{B}(\mathcal{H})$}};
\draw (2,2)node  {{$T$}};
\draw (-2,0)node  {{$T \in \mathcal{B}(\mathcal{H}_{1} \otimes \mathcal{H}_{2})$}};
\draw (2,0)node  {{$T$}};
\draw (0.7-0.3,0.5)node  {{$\mathcal{H}_{1}$}};
\draw (0.7-0.3,-0.5)node  {{$\mathcal{H}_{2}$}};
\draw (3.7-0.3,0.5)node  {{$\mathcal{H}_{1}$}};
\draw (3.7-0.3,-0.5)node  {{$\mathcal{H}_{2}$}};
\draw (-2,-2)node  {{$AB \in M_{n}(\mathbb{C})$}};
\draw (1.75,-2)node  {{$A$}};
\draw (2.75,-2)node  {{$B$}};
\draw (-2,-4.5)node  {{$A^{t}\in M_{n}(\mathbb{C})$}};
\draw (2,-4.5)node  {{$A$}};
\end{tikzpicture}
\end{center}
\caption{Graphical representations of vectors and operators}
\label{graph_table}
\end{figure}
\newpage 
 Let $W_{a,b}$ denote the vector space of all $(a,b)$-unitarily equivariant maps. Recall from \Cref{sectSchur} 
 that the maps $C_{\Phi_\pi}=(\theta_n^{\otimes a+1}\otimes i_n^{\otimes b})(\sigma_{k+1}(\pi))$ form a basis of $W_{a,b}$ with $\pi$ running 
 through the permutation group $S_{k+1}$, where $\sigma_{k+1}$ is the unitary representation of the permutation group defined in 
 \Cref{eq_coro_Uab2}. Based on the graphical representations given in Table \ref{graph_table}, we give the graphical representations of the 
 $C_{\Phi_\pi}$. We first consider the case of $a=1$ and $b=1$ and make some observations. We discuss the general case in Theorem 
 \ref{graphical_mainthm}.
 
We compute the graphical representation of the Choi matrices $C_{\Phi_{\pi}}$ with $\pi\in S_3$, by first computing the one of $\sigma_{3}(\pi)$. We start with the representation of $\sigma_{3}(123)$ as an example. This map is defined as follows:
\begin{equation*} 
\begin{array}{cccc}
\sigma_{3}(123) :& (\mathbb{C}^{n})^{\otimes 3} &\rightarrow &  (\mathbb{C}^{n})^{\otimes 3}  \\
& v_{1} \otimes v_{2} \otimes v_{3}               & \mapsto    & v_{3} \otimes v_{1} \otimes v_{2}
\end{array}\,.
\end{equation*}
Its graphical representation is given on the left in \Cref{graph_sigma3(123)}. By \Cref{eq_coro_Uab}, 
$C_{\Phi_{(123)}}= (\theta_{n} ^{\otimes 2} \otimes i_n)\sigma_{3}(123)$. Using the graphical representation of the transpose map, 
we obtain that $C_{\Phi_{(123)}}$ is represented by the right figure in \Cref{graph_sigma3(123)}.\\
 \begin{figure}[h!]
 \begin{center}
 \begin{tikzpicture}	
 	\usetikzlibrary{calc}
\draw[black] (-4,1) -- (-3.5,1);
\draw[red] (-4,0) -- (-3.5,0);
\draw[blue] (-4,-1) -- (-3.5,-1);
\draw[red] (-1.5,1) -- (-1,1);
\draw[blue] (-1.5,0) -- (-1,0);
\draw[black] (-1.5,-1) -- (-1,-1);

\draw[black] (-3.5, 1) .. controls (-2.8,0.8) and (-2,-1).. (-1.5,-1);
\draw[red] (-3.5, 0) .. controls (-2.6,0.5) and (-2,1).. (-1.5,1);
\draw[blue] (-3.5, -1) .. controls (-2.6,-0.5) and (-2,0).. (-1.5,0);

\draw (0.5,0)node  {{$\overset{\theta_n^{\otimes 2}\otimes i_n(\cdot)}\longrightarrow$}};  

	\draw[red] (2.5,1) -- (3,1);
	\draw[blue] (2.5,0) -- (3,0);
	\draw[blue] (2.5,-1) -- (3,-1);
	\draw[black] (5,1) -- (4.5,1);
	\draw[red] (5,0) -- (4.5,0);
	\draw[black] (5,-1) -- (4.5,-1);
	\draw[blue] (3, 0) .. controls (3.6,-0.2) and (3.6,-0.8).. (3,-1);
	\draw[black] (4.5, 1) .. controls (4,0.8) and (4,-0.8).. (4.5,-1);
	\draw[red] (3, 1) .. controls (3.5,0.9) and (4,0).. (4.5,0);
	
\end{tikzpicture}
\end{center}
\caption{Graphical representation of $\sigma_3(123)$ (on the left) and $C_{\Phi_{(123)}}$ (on the right)}
\label{graph_sigma3(123)}	
\end{figure}

We take the following convention in order to represent the operators $\sigma_{k+1}(\pi)$: input on the left are represented by black (full) dots 
while output on the right are represented by white (empty) dots. The graphical representation of the operator $\sigma_{k+1}(\pi)$ then 
corresponds to tracing a wire from the $i$th (black) dot on the left to the $\pi^{-1}(i)$ (white) dot on the right. Using graphical computation, it can 
be directly check that $\sigma_{k+1}(\pi)$ and $U^{\otimes k+1}$ commute for all unitary operator $U\in\Ucal_n$.\\
Then, applying the transpose map to the first $(a+1)$ tensors corresponds graphically to interchange the black and the white dot at the first $(a+1)$ 
rows (starting from the highest). Again, it can be directly checked from a graphical computation that $C_{\Phi(\pi)}$ and 
$(\overline U^{\otimes a+1}\otimes U^{\otimes b})$ commute. This is illustrated in the case of the permutation $(123)$ in \Cref{graph_final}.
 
\begin{figure}
\begin{center}
\begin{tikzpicture}
\node (rect) at (-4,0) [draw,thick,minimum width=2cm,minimum height=3cm] (a) {};
\node (rect) at (-2-4,1.2) [draw,thick,minimum width=0.5cm,minimum height=0.5cm] {$\overline{U}$};
\node (rect) at (-2-4,0) [draw,thick,minimum width=0.5cm,minimum height=0.5cm] {$\overline{U}$};
\node (rect) at (-2-4,-1.2) [draw,thick,minimum width=0.5cm,minimum height=0.5cm] {$U$};
\node (rect) at (2-4,1.2) [draw,thick,minimum width=0.5cm,minimum height=0.5cm] {$U^{t}$};
\node (rect) at (2-4,0) [draw,thick,minimum width=0.5cm,minimum height=0.5cm] {$U^{t}$};
\node (rect) at (2-4,-1.2) [draw,thick,minimum width=0.5cm,minimum height=0.5cm] {$U^{*}$};
\draw[black] (-3-4,1.2) -- (-2.3-4,1.2);
\draw[black] (-1.7-4,1.2) -- (-1-4,1.2);

\draw[black] (-3-4,0) -- (-2.3-4,0);
\draw[black] (-1.7-4,0) -- (-1-4,0);

\draw[black] (-3-4,-1.2) -- (-2.3-4,-1.2);
\draw[black] (-1.7-4,-1.2) -- (-1-4,-1.2);

\draw[black] (2.3-4,1.2) -- (3-4,1.2);
\draw[black] (1-4,1.2) -- (1.7-4,1.2);

\draw[black] (2.3-4,0) -- (3-4,0);
\draw[black] (1-4,0) -- (1.7-4,0);

\draw[black] (2.3-4,-1.2) -- (3-4,-1.2);
\draw[black] (1-4,-1.2) -- (1.7-4,-1.2);

\draw (0,0)node  {{$ =$}}; 
\draw[black] (-1-4, 0) .. controls (-0.4-4,-0.2) and (-0.4-4,-0.8).. (-1-4,-1.2);
\draw[black] (1-4, 1.2) .. controls (0.4-4,0.8) and (0.4-4,-0.8).. (1-4,-1.2);
\draw[black] (-1-4, 1.2) .. controls (-0.5-4,0.9) and (0.5-4,-0.3).. (1-4,0);

\path node at ( 2,0) [shape=circle,fill] {};
\node at ( 2,-1.2) [shape=circle,draw] {}
	 
node at (4,1.2) [shape=circle,draw] {}
node at (4,-1.2) [shape=circle,fill] {}
node at (2,1.2) [shape=circle,fill] {}
node at (4,0) [shape=circle,draw] {};
 
\draw[black] (2, 0) .. controls (2.8,-0.2) and (2.8,-1).. (2.15,-1.2);
\draw[black] (3.85, 1.2) .. controls (3.2,1) and (3.2,-1).. (4,-1.2);
\draw[black] (2, 1.2) .. controls (2.5,0.9) and (3.5,-0.3).. (3.85,0);
\end{tikzpicture}
\end{center}
\caption{Graph of $C_{\Phi_{(123)}}$. Each black dot is connected to a white dot in a one-to-one way, according to the permutation $(123)$.}
\label{graph_final}
\end{figure}

We can obtain similarly the graphical representations of the Choi matrices $C_{\Phi_{\pi}}$ corresponding to all $\pi \in  S_{3}$. There are 
$3!=6$ different possibilities to trace a wire between black and white dots in a one-to-one way. Each of them corresponds to a different 
permutation $\pi\in S_3$. This is illustrated in \Cref{graph_complete}.

\begin{center}
	\begin{figure}[h!]
	
\begin{center}
		\begin{tikzpicture}[scale=0.8]
		\usetikzlibrary{calc}
		\node (rect) at (0,0) [draw,thick,minimum width=11cm,minimum height=3.5cm] (a) {};
		\draw[black] (-6.85,1) -- (6.85,1);
		\draw[black] (-5,2.2) -- (-5,-2.2);
		\draw (-6,1.5)node  {{$\pi$}};
		\draw (-6,-.5)node  {{$C_{\Phi_{\pi}} $}};
		\draw (-4,1.5)node   {{ $\mathds 1$}};
		\draw (-2,1.5)node  {{$ (23)$}};
		\draw (0,1.5)node  {{$ (12) $}};
		\draw (2,1.5)node  {{$ (13)$}};
		\draw (4,1.5)node  {{$ (123)$}};
		\draw (6,1.5)node  {{$(132) $}};
		\path node at (-4.5,0.5) [shape=circle,fill] {};
		\node at (-4.5,-0.5) [shape=circle,fill] {}
		node at (-4.5,-1.5) [shape=circle,draw] {}
		node at (-3.5,0.5) [shape=circle,draw] {}
		node at (-3.5,-0.5) [shape=circle,draw] {}
		node at (-3.5,-1.5) [shape=circle,fill] {}
		
		node at (-2.5,0.5) [shape=circle,fill] {}
		node at (-2.5,-0.5) [shape=circle,fill] {}
		node at (-2.5,-1.5) [shape=circle,draw] {}
		node at (-1.5,0.5) [shape=circle,draw] {}
		node at (-1.5,-0.5) [shape=circle,draw] {}
		node at (-1.5,-1.5) [shape=circle,fill] {}
		
		node at (-.5,.5) [shape=circle,fill] {}
		node at (-.5,-.5) [shape=circle,fill] {}
		node at (-.5,-1.5) [shape=circle,draw] {}
		node at (.5,.5) [shape=circle,draw] {}
		node at (.5,-.5) [shape=circle,draw] {}
		node at (.5,-1.5) [shape=circle,fill] {}
		
		node at (1.5,0.5) [shape=circle,fill] {}
		node at (1.5,-0.5) [shape=circle,fill] {}
		node at (1.5,-1.5) [shape=circle,draw] {}
		node at (2.5,0.5) [shape=circle,draw] {}
		node at (2.5,-0.5) [shape=circle,draw] {}
		node at (2.5,-1.5) [shape=circle,fill] {}
		
		node at (3.5,0.5) [shape=circle,fill] {}
		node at (3.5,-0.5) [shape=circle,fill] {}
		node at (3.5,-1.5) [shape=circle,draw] {}
		node at (4.5,0.5) [shape=circle,draw] {}
		node at (4.5,-0.5) [shape=circle,draw] {}
		node at (4.5,-1.5) [shape=circle,fill] {}
		
		node at (5.5,0.5) [shape=circle,fill] {}
		node at (5.5,-0.5) [shape=circle,fill] {}
		node at (5.5,-1.5) [shape=circle,draw] {}
		node at (6.5,0.5) [shape=circle,draw] {}
		node at (6.5,-0.5) [shape=circle,draw] {}
		node at (6.5,-1.5) [shape=circle,fill] {};

		\draw[black] (-4.5, 0.5) -- (-3.7,0.5);
		\draw[black] (-4.3, -1.5)  -- (-3.7,-1.5);
		\draw[black] (-4.7,-0.5)  --  (-3.7,-0.5);
	 	
		\draw[black] (-2.5, 0.5)  -- (-1.7,.5);
		\draw[black] (-2.5, -0.5) .. controls (-2,-0.4)  and (-2,-1.5)  .. (-2.3,-1.5);
		\draw[black] (-1.7, -.5) .. controls (-2,-.4)  and (-2,-1.5)  .. (-1.7 ,-1.5);
	 			
		\draw[black] (-.5, 0.5) .. controls (-0.2,0.3)  and (0,-.1)  .. (0.3,-.5);
		\draw[black] (-.3, -1.5)  -- (0.3,-1.5);
		\draw[black] (-0.5, -.5) .. controls (-0.5,-1.2)  and (0.2,0)  .. (.3,0.5);
 			
		\draw[black] (1.3,.5) .. controls (2,0.3)  and (2,-1.3)  .. (1.7,-1.5);
		\draw[black] (1.5, -.5)  -- (2.3,-0.5);
		\draw[black] (2.3, .5) .. controls (2,0.3)  and (2,-1.3)  .. (2.5,-1.5);
	 
		\draw[black] (3.5,.5) .. controls (3.7,0.3)  and (4.1,-0.2)  .. (4.3,-.5);
		\draw[black] (4.3, .5) .. controls (4,0.2)  and (4,-1.3)  .. (4.5,-1.5);
		\draw[black] (3.5, -.5) .. controls (4,-0.5)  and (4,-1.5)  .. (3.7,-1.5);
	 					
		\draw[black] (5.5, .5)  .. controls (6,0.3)  and (6,-1.3)  .. (5.7,-1.5);
		\draw[black] (5.5, -.5) .. controls (5.7,0)  and (6.2,.3)  .. (6.3,.5);
		\draw[black] (6.3, -.5) .. controls (6,-0.5)  and (6,-1.6)  .. (6.5,-1.5);
		\draw[black] (-3,2.2) -- (-3,-2.2);
		\draw[black] (-1,2.2) -- (-1,-2.2);
		\draw[black] (1,2.2) -- (1,-2.2);
		\draw[black] (3,2.2) -- (3,-2.2);
		\draw[black] (5,2.2) -- (5,-2.2);
		\end{tikzpicture} 
		
\end{center}
		\caption{Graphical representations of the $C_{\Phi_\pi}$ for each $\pi\in S_3$}
		\label{graph_complete}	
	\end{figure}
\end{center}

Figure \ref{graph_complete} gives the basis elements of the vector space $W_{1,1}$. Theorem \ref{graphical_mainthm} generalizes this fact to 
any $(a,b)$-unitarily equivariant map. We omit the proof, as it follows exactly the same idea as the example 
above.

\begin{theorem}\label{graphical_mainthm}
The graphical representation of the Choi matrices of the basis elements $C_{\Phi(\pi)}$ of the vector space $W_{a,b}$ is given by the following 
rule. Being a matrix on $({\Cbb^n})^{\otimes k+1}$, its graphical representative has $(k+1)$ input (on the left) and $(k+1)$ output (on the right). The 
first $(a+1)$ inputs are symbolized by black (full) dots, the remaining $b$ by white (empty) dots. In the same way, the first $(a+1)$ outputs are 
symbolized by white (empty) dots, the remaining $b$ by black (full) dots. Then each element of this basis corresponds to a possible wiring 
between black and white dot, in a one-to-one way.
\end{theorem}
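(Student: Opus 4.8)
The plan is to reduce everything to the two graphical operations already isolated in the worked $S_3$ example, and then to promote that one computation into a statement uniform in $\pi$ and in $a,b$. The starting point is \Cref{coro_Uab}, which tells us that, via the Choi correspondence, the basis elements of $W_{a,b}$ are represented by the Choi matrices
\[
C_{\Phi_\pi}=\left(\theta_n^{\otimes a+1}\otimes i_n^{\otimes b}\right)\left[\sigma_{k+1}(\pi)\right],\qquad \pi\in S_{k+1},\ k=a+b.
\]
So it suffices to draw the right-hand side for an arbitrary $\pi$ and read off the rule. First I would record the diagram of the bare permutation operator $\sigma_{k+1}(\pi)$: as an operator on $(\Cbb^n)^{\otimes k+1}$ it has $k+1$ input legs on the left and $k+1$ output legs on the right, and since it merely permutes tensor factors, its diagram is nothing but the perfect matching joining the $i$th left dot to the $\pi^{-1}(i)$th right dot, with every left dot black (input) and every right dot white (output), as fixed by the convention of \Cref{graph_table}.

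The second step is to feed this diagram through the partial transpose $\theta_n^{\otimes a+1}\otimes i_n^{\otimes b}$. Graphically, $\theta_n$ acts on a single factor by bending its two legs around, i.e.\ by exchanging the input and the output role of that row; applying it to the first $a+1$ factors therefore bends those rows, rerouting the matching of $\sigma_{k+1}(\pi)$ into a diagram in which some strands now join two left dots (cups) or two right dots (caps) while the remaining strands stay left-to-right, and in which the dots carry exactly the black/white pattern asserted in the theorem. This is precisely the passage from $\sigma_3(123)$ to $C_{\Phi_{(123)}}$ displayed in the example and in \Cref{graph_final}; the only thing to check is that the bending acts row-by-row independently, so that the same recoloring-and-rerouting recipe applies verbatim to every $\pi$ and to every choice of the $a+1$ transposed rows. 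I would verify this directly from the definition of the graphical transpose, exactly as in the example, which is the routine step the authors omit.

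Finally I would run the counting argument that turns ``each $C_{\Phi_\pi}$ is such a diagram'' into ``the basis is in bijection with all such diagrams.'' After the recoloring there are $k+1$ black and $k+1$ white dots, and each $C_{\Phi_\pi}$ is drawn as a perfect matching pairing black dots with white dots; conversely every such matching comes from a unique permutation of the $k+1$ factors. Since \Cref{coro_Uab} gives that $\{C_{\Phi_\pi}\}_{\pi\in S_{k+1}}$ is a basis of $W_{a,b}$, of dimension $(k+1)!$ when $n\ge k+1$, and since there are exactly $(k+1)!$ black--white perfect matchings, the assignment $\pi\mapsto(\text{diagram of }C_{\Phi_\pi})$ is a bijection onto all of them, which is the assertion. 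I expect the genuine work -- the main obstacle -- to lie not in this counting but in the middle step: one must argue carefully that the graphical leg-bending implements the partial transpose consistently across all $a+1$ transposed rows at once and for an arbitrary underlying matching, and that distinct permutations never collapse to the same diagram after recoloring. Both follow from the factorwise nature of the transpose together with the linear independence of the $\sigma_{k+1}(\pi)$ for $n\ge k+1$, but they are exactly the points one must pin down to make ``the same idea as the example'' into a proof.
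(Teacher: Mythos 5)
Your proposal is correct and takes essentially the same route the paper intends: the paper omits the proof, saying it ``follows exactly the same idea as the example above,'' and your argument is precisely that generalization --- representing $\sigma_{k+1}(\pi)$ as a left-to-right matching via \Cref{coro_Uab}, implementing $\theta_n^{\otimes a+1}\otimes i_n^{\otimes b}$ by the row-by-row recoloring/bending of the first $a+1$ legs exactly as in the $S_3$ example, and then identifying the $(k+1)!$ permutations bijectively with the $(k+1)!$ black--white wirings (with the same caveat $n\geq k+1$ for linear independence that the paper itself notes). No gap to report.
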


Theorem \ref{graphical_mainthm} gives a graphical representation of the Choi matrices of the basis elements of $W_{a,b}$, which completes the 
characterization of the vector space $W_{a,b}$ of all $(a,b)$-unitarily equivariant maps. As an illustration, we 
study the graphical representation of the Choi matrix of the family of linear maps $\Phi_{\alpha,\beta,3}$ studied in \cite{COLLINS2018398} and 
given by \Cref{eq_Phi_alphabeta}.
 It is given as follows:\\

\begin{figure}[h]
	\begin{center}
		\begin{tikzpicture}[scale=0.7]
		
		\path node at (-4.5,1) [shape=circle,fill] {};
		\node at (-4.5,0) [shape=circle,fill] {}
		node at (-4.5,-1) [shape=circle,draw] {}
		node at (-3.5,1) [shape=circle,draw] {}
		node at (-3.5,0) [shape=circle,draw] {}
		node at (-3.5,-1) [shape=circle,fill] {}
		
		node at (-1.5,1) [shape=circle,fill] {}
		node at (-1.5,0) [shape=circle,fill] {}
		node at (-1.5,-1) [shape=circle,draw] {}
		node at (-0.5,1) [shape=circle,draw] {}
		node at (-0.5,0) [shape=circle,draw] {}
		node at (-0.5,-1) [shape=circle,fill] {}
		
		node at (1.5,1) [shape=circle,fill] {}
		node at (1.5,0) [shape=circle,fill] {}
		node at (1.5,-1) [shape=circle,draw] {}
		node at (2.5,1) [shape=circle,draw] {}
		node at (2.5,0) [shape=circle,draw] {}
		node at (2.5,-1) [shape=circle,fill] {}
		
		node at (4.5,1) [shape=circle,fill] {}
		node at (4.5,0) [shape=circle,fill] {}
		node at (4.5,-1) [shape=circle,draw] {}
		node at (5.5,1) [shape=circle,draw] {}
		node at (5.5,0) [shape=circle,draw] {}
		node at (5.5,-1) [shape=circle,fill] {};
		\draw (-2.5,0)node  {{$+$}};
		\draw (0.5,0)node  {{$+$}};
		\draw (0.8,0)node  {{$\alpha$}};
		\draw (3.5,0)node  {{$+$}};
		\draw (4,0)node  {{$\beta$}};
 
\draw[black] (-4.5, 1)  -- (-3.7,0);
\draw[black] (-4.5, 0)  -- (-3.7,1);
\draw[black] (-4.3, -1)  -- (-3.7,-1);

\draw[black] (-1.5, 1) .. controls (-1,0.2)  and (-1,-0.8)  .. (-1.3,-1);
\draw[black] (-.7, 1) .. controls (-1,0.2)  and (-1,-0.8)  .. (-.3,-1);	
\draw[black] (-1.5, 0)  -- (-0.7,0);	

\draw[black] (1.5, 0)  -- (2.3,0);
\draw[black] (1.5, 1)  -- (2.3,1);
\draw[black] (1.7, -1)  -- (2.5,-1); 

\draw[black] (4.5, 0) .. controls (5,0.1)  and (5,-1)  .. (4.7,-1);
\draw[black] (5.3, 0) .. controls (5,0.2)  and (5,-1.5)  .. (5.7,-1);	
\draw[black] (4.5, 1)  -- (5.3,1);	
		\end{tikzpicture}
	\end{center}
	\caption{Graphical representation of $C_{\Phi_{\alpha,\beta,n}}$}
	\label{graph_anexample}
\end{figure}
Remark that it corresponds to taking linear combinations of the graphical representations of the permutations $(12)$, $(13)$, $(1)$ and $(23)$ 
respectively and from left to right. A more general $3$-parameters family can be defined by adding the permutations $(123)$ and $(132)$ 
and it gives:
\begin{equation}
\begin{array}{cccc}
	\Phi_{\alpha, \beta,\gamma,n} :& M_{n}(\mathbb{C}) &\rightarrow & M_{n}(\mathbb C) \otimes M_{n}(\mathbb C)  \\
	                          & A               & \mapsto &  A^{t} \otimes \mathds{1}_{n}+ \mathds{1}_{n} \otimes A + \mathrm{Tr}(A)(\alpha \mathds{1}_{n^{2}}+\beta B_{n^{2}})\\
	                          & & & +\gamma\left(B_{n^2}(\mathds{1}_{n} \otimes A)+(\mathds{1}_{n} \otimes A) B_{n^2}\right)\,. 
	                          \end{array}
\end{equation}
We leave the study of this more general map to future work.

\section{Application to entanglement detection}\label{sect4}

One application of positive maps that are not completely positive is entanglement detection in quantum information theory. We recall that a 
density matrix in $M_n(\Cbb)$ for some integer $n\geq0$ is a positive semi-definite operator on $\Cbb^n$ with trace equal to one. We denote by 
$\Scal(n)$ the set of density matrices on $\Cbb^n$ and by $\Scal(m\times n)$ the set of density matrices on $(\Cbb^m\otimes\Cbb^n)$. Then a 
density matrix on $(\Cbb^m\otimes\Cbb^n)$ is called \emph{separable} if it is in the convex hull of tensor-product density matrices. We denote by 
$\SEP(m,n)$ the set of separable density matrices on $(\Cbb^m\otimes\Cbb^n)$, that is,
\[\SEP(m,n)=\left\{\sum_i\,\lambda_i\rho_A^i\otimes\rho_B^i\,;\,\rho_A^i\in\Scal(m),\rho_B^i\in\Scal(n)\,,\,\lambda_i\geq0,\sum_i\lambda_i=1\right\}\,.\]
We are interested in  the complement of this set in $\Scal(m\times n)$.
If a density matrix 
$\rho$ is called $\Scal(m\times n)\backslash \SEP(n,m)$, it is called \emph{entangled}. The first operational characterization of 
entanglement/separability was proved in \cite{HORODECKI19961}: a density matrix $\rho\in\Scal(m\times n)$ is entangled if and only 
if there exists a positive map $\phi\,:\,M_n(\Cbb)\to M_m(\Cbb)$ such that $(i_m\otimes\phi)(\rho)$ is not positive semi-definite. 
Remark that if $\phi$ is completely positive, then $(i_m\otimes\phi)(\rho)$ is necessarily positive semi-definite. This means that 
positive -- but not completely positive - maps lead to \emph{entanglement detectors}. The most well-known example of positive maps 
leading to entanglement detection is the partial transpose, and the density matrices that remain positive under its action are the 
so-called \emph{PPT density matrices}. More generally, given a positive map $\phi\,:\,M_n(\Cbb)\to M_N(\Cbb)$ for any natural number $N>0$, we can define
\begin{equation}\label{eq_entanglementwitness}
 \Scal_m(\phi)=\{\rho\in\Scal(m\times n)\,;\,(i_m\otimes\phi)(\rho)\geq0\}\,.
\end{equation}
Thus the set of PPT density matrices on $\Cbb^m\otimes\Cbb^n$ is $\Scal_m(\theta_{n})$, where $\theta_{n}$ is the transposition on 
$M_n(\Cbb)$. Apart from $(m,n)=(2,2)$ or $(m,n)=(2,3)$, it is known that $\Scal_m(\theta_{n})$ is strictly different from $\SEP(m,n)$. 
In fact, deciding whether a density matrix is separable or not is known as a hard computational problem 
\cite{Gharibian:2010:SNQ:2011350.2011361,gurvits2003classical}. 
It means in particular that an infinite 
number of positive maps are needed to detect all entangled density matrices.  
It is thus a central question in quantum information theory 
to find families of maps that are sufficient for this task.

More generally, it is also interesting to detect the amount of entanglement contains in a bipartite density matrix. 
The \emph{Schmidt number} represents in this respect a good integer quantifier. We recall its definition. Recall that any norm-one 
vector $\psi\in (\Cbb^m\otimes\Cbb^n)$ admits a unique Schmidt decomposition as
\[\psi=\sum_{i=1}^t\,\sqrt{\lambda_i}\,\psi^m_i\otimes\psi^n_i\,,\]
where $(\psi_i^m)_{i=1,...,t}$ (resp. $(\psi_i^n)_{i=1,...,t}$) is an orthogonal family on $\Cbb^m$ (resp. on $\Cbb^n$) with consequently 
$1\leq t\leq\min\{m,n\}$, and where $\sum_{i=1}^t\,\lambda_i=1$. Then its \emph{Schmidt rank} is defined as $\operatorname{SR}(\psi):=t$. 
The Schmidt number of a density matrix $\rho\in\Scal(m\times n)$ is defined subsequently as
\[\operatorname{SN}(\rho):=\min\left\{{\underset{j} \max}\,\operatorname{SR}(\psi_j)\,;\,\rho=\sum_j\,\rho_j|\psi_j\rangle\langle\psi_j|\right\}\,,\]
where the minimum is over all possible decompositions of $\rho$ as a linear combination of pure states. 

We denote by $\operatorname{SN}_t(m,n)$ the set of density matrices with Schmidt number smaller than $t\geq1$. 
We shall also refer 
to elements of this set as \emph{$t$-separable} density matrices and elements of its complement 
$\Scal(m\times n)\backslash\operatorname{SN}_t(m,n)$ as \emph{$t$-entangled} density matrices. Remark in particular that the 
separable density matrices are exactly the ones with Schmidt number equal to one: $\SEP(m,n)=\operatorname{SN}_1(m,n)$. 
In the same way that separability (or $1$-separability) can be checked using positive maps, $t$-separability can be checked using 
$t$-positive maps. That is, a density matrix $\rho\in\Scal(m\times n)$ is $t$-separable if and only if $(i_m\otimes \phi)(\rho)\geq0$ for 
all $t$-positive maps $\phi:M_n(\Cbb)\to M_n(\Cbb)$. \\

We are now in position to state the main result of this article.
\begin{theorem}\label{theo_entanglementwitness}
Let $\phi\,:\,M_n(\Cbb)\to M_N(\Cbb)$ be a $t$-positive - but not completely positive - map, with $1\leq t\leq \min\{m,n\}$. Then there 
exists a family of 
unitarily equivariant $t$-positive maps $(\Phi_a)_{a\geq1}$ with 
$\Phi_a\,:\,M_n(\Cbb)\to M_{N_a}(\Cbb)$, $\underset{a\to+\infty}{\lim} {N_a}=+\infty$, 
such that $(\Scal_m(\Phi_a))_{a\geq1}$ form a decreasing family of sets and:
\begin{equation}\label{eq_theo_entanglementwitness}
  \underset{a\geq1}{\cap}\Scal_m(\Phi_a)\subset\Scal_m(\phi)\,.
\end{equation}
In other words, for all $t$-entangled density matrices $\rho\notin\Scal_m(\phi)$, there exists an integer $a_0\geq1$ such that 
$\rho\notin\Scal_m(\Phi_a)$ for all $a\geq a_0$.
\end{theorem}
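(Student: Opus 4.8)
The plan is to build the family $(\Phi_a)$ by averaging $\phi$ over the unitary group while \emph{recording} the averaging unitary in an auxiliary representation whose resolving power increases with $a$. Fix once and for all a unitary representation $\kappa$ of $\Ucal_n$ on some $\Cbb^{D}$, $D:=\dim\kappa$, together with a unit vector $\psi_0\in\Cbb^{D}$ whose projective stabilizer is exactly the center $Z=\{e^{\iota\theta}\mathds{1}_n\}$ of $\Ucal_n$; such a pair exists, for instance taking $\kappa$ to be two copies of the conjugation representation $U\mapsto U\otimes\overline{U}$ on $M_n(\Cbb)\oplus M_n(\Cbb)$ and $\psi_0=(X_1,X_2)$ with $X_1,X_2$ generating $M_n(\Cbb)$ and $X_1$ having distinct eigenvalues whose set is invariant under no nontrivial unimodular scaling. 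I then set $\kappa_a=\kappa^{\otimes a}$, $C_0^{(a)}=|\psi_0^{\otimes a}\rangle\langle\psi_0^{\otimes a}|$, and define
\[
\Phi_a(X)=\int_{\Ucal_n}\big(\kappa_a(U)\,C_0^{(a)}\,\kappa_a(U)^*\big)\otimes\phi\big(U^*XU\big)\,dU\,,
\]
a map $M_n(\Cbb)\to M_{D^a}(\Cbb)\otimes M_N(\Cbb)$, so that $N_a=D^aN\to+\infty$.

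The two structural properties come first. A change of variable $U\mapsto VU$ gives $\Phi_a(VXV^*)=(\kappa_a(V)\otimes\mathds{1}_N)\,\Phi_a(X)\,(\kappa_a(V)\otimes\mathds{1}_N)^*$, so each $\Phi_a$ is unitarily equivariant with respect to the unitary representation $V\mapsto\kappa_a(V)\otimes\mathds{1}_N$. For $t$-positivity, for $Y\geq0$ in $M_t(\Cbb)\otimes M_n(\Cbb)$ the integrand of $(i_t\otimes\Phi_a)(Y)$ is, at each $U$, the tensor product of the positive operator $\kappa_a(U)C_0^{(a)}\kappa_a(U)^*$ with $(i_t\otimes\phi)\big((\mathds{1}_t\otimes U^*)Y(\mathds{1}_t\otimes U)\big)$, which is positive because $\phi$ is $t$-positive; integrating preserves positivity. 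Finally, replacing $\Phi_a$ by the direct sum $\bigoplus_{b\le a}\Phi_b$ (still unitarily equivariant and $t$-positive) makes $(\Scal_m(\Phi_a))_a$ decreasing without changing the intersection, so it remains only to prove \eqref{eq_theo_entanglementwitness}.

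For the inclusion, fix $\rho\notin\Scal_m(\phi)$ and a unit vector $\xi\in\Cbb^m\otimes\Cbb^N$ with $\langle\xi|(i_m\otimes\phi)(\rho)|\xi\rangle<0$. Testing $(i_m\otimes\Phi_a)(\rho)$ against $w=\xi\otimes\psi_0^{\otimes a}$ and using $C_0^{(a)}=|\psi_0^{\otimes a}\rangle\langle\psi_0^{\otimes a}|$ collapses the auxiliary leg and yields
\[
\langle w|(i_m\otimes\Phi_a)(\rho)|w\rangle=\int_{\Ucal_n}p(U)^a\,\langle\xi|(i_m\otimes\phi)(\rho_U)|\xi\rangle\,dU\,,\qquad \rho_U:=(\mathds{1}_m\otimes U^*)\rho(\mathds{1}_m\otimes U)\,,
\]
where $p(U)=|\langle\psi_0|\kappa(U)\psi_0\rangle|^2$ satisfies $0\le p\le1$, with $p(U)=1$ iff $U\in Z$ by the choice of $(\kappa,\psi_0)$. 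The crux is a concentration estimate: since $p$ is continuous, $p\equiv1$ on the compact set $Z$ and $\sup_{U\notin\Ncal}p<1$ for every neighbourhood $\Ncal$ of $Z$, the probability measures $p(U)^a\,dU/\!\int p^a$ concentrate on $Z$ as $a\to+\infty$. Because central unitaries act trivially by conjugation, $\rho_U=\rho$ for $U\in Z$, so $U\mapsto\langle\xi|(i_m\otimes\phi)(\rho_U)|\xi\rangle$ is continuous and negative on $Z$, hence bounded above by some $-\delta<0$ on a neighbourhood $\Ncal$ of $Z$; splitting the integral over $\Ncal$ and $\Ncal^c$ and invoking the concentration estimate gives $\langle w|(i_m\otimes\Phi_a)(\rho)|w\rangle<0$ for all large $a$. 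Thus $\rho\notin\Scal_m(\Phi_a)$ eventually, which proves $\bigcap_a\Scal_m(\Phi_a)\subseteq\Scal_m(\phi)$.

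The verifications of equivariance, $t$-positivity, and the change of variables are essentially formal; the substance lies in two places. First, exhibiting the fiducial pair $(\kappa,\psi_0)$ with projective stabilizer reduced to the center — equivalently, a representation in which the coherent states $\kappa(U)\psi_0$ separate the cosets $UZ$ — which is precisely what lets a single scalar function $p$ distinguish all non-central unitaries. Second, the measure-concentration step, where one must check that $\int_{\Ncal}p^a$ decays only polynomially in $a$ (since $p=1-O(d(U,Z)^2)$ near $Z$) while $\int_{\Ncal^c}p^a$ decays exponentially, so the negative contribution from the neighbourhood of $Z$ dominates. I expect this concentration estimate, together with the construction of the fiducial representation, to be the main obstacle, whereas the preservation of $t$-positivity and of equivariance is routine.
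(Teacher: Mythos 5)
Your proposal is correct, but it takes a genuinely different route from the paper's. The paper proceeds in two steps: it first builds a single \emph{infinite-dimensional} unitarily equivariant $t$-positive map $\Phi(X)=(\phi(U^*XU))_{U\in\Ucal_n}$ acting on $\Hcal=L_2(\Ucal_n,\Cbb^N)$, equivariant for the left regular representation tensored with the trivial one, and detects $\rho$ with an $L_2$-section supported on a small neighbourhood of the identity; it then truncates via the Peter--Weyl decomposition, compressing $\Phi$ by the orthogonal projections $P_d$ onto finitely many isotypic components, so that $t$-positivity, equivariance, the decreasing property (from $P_d\leq P_{d+1}$) and eventual detection (from $P_d\psi\to\psi$) are all inherited essentially for free. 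Your construction never leaves finite dimensions: you average against coherent states $\kappa^{\otimes a}(U)\psi_0^{\otimes a}$ of a fiducial representation, the tensor power $a$ playing the role of the paper's Peter--Weyl cutoff $d$, and detection becomes a Laplace-type concentration argument for $p(U)^a$ around the centre $Z$ instead of a norm-convergence argument. Both points you flag as the ``main obstacles'' are in fact fine. Your fiducial pair works: if $UX_iU^*=e^{\iota\theta}X_i$ for $i=1,2$ and the spectrum of $X_1$ is simple and invariant under no nontrivial unimodular scaling, then $e^{\iota\theta}=1$, whence $U$ commutes with two generators of $M_n(\Cbb)$ and is central. And the concentration step does not even require the polynomial-versus-exponential comparison you anticipate: choose a closed neighbourhood $\Ncal'\subset\Ncal$ of $Z$ on which $p\geq q_1$ with $q_1>q_0:=\sup_{\Ncal^c}p$ (possible since $p$ is continuous and $p\equiv1$ on $Z$); then $\int_{\Ncal^c}p^a\leq q_0^a\,\mathrm{vol}(\Ncal^c)$ while $\int_{\Ncal}p^a\geq q_1^a\,\mathrm{vol}(\Ncal')$, and the ratio tends to $0$, so the negative contribution dominates for large $a$. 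What the paper's route buys is that all the analysis is outsourced to Peter--Weyl and to compressions of a single universal map; what yours buys is a completely explicit, closed-form family of finite-dimensional maps with explicit representations $\kappa^{\otimes a}\otimes\mathds{1}_N$ and no infinite-dimensional detour, at the cost of constructing the fiducial pair and the (easy) concentration estimate. Your direct-sum trick to enforce monotonicity of $(\Scal_m(\Phi_a))_{a\geq1}$ is legitimate and substitutes for what the paper gets automatically from nested projections.
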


We divide the proof of \Cref{theo_entanglementwitness} in two parts: first in \Cref{sect41}, we prove the existence of a 
$\Phi:M_n(\Cbb)\to \Bcal(\Hcal)$, for some \emph{universal infinite dimensional} Hilbert space $\Hcal$ and universal unitary representation 
$U\mapsto V(U)$ that fulfills the equivariance property \eqref{eqgeneral}, such that $\Scal_m(\Phi)\subset\Scal_m(\phi)$. Then we prove the 
theorem by an application of the Peter-Weyl Theorem together with \Cref{coro_main}.

\subsection{Proof of Theorem \ref{theo_entanglementwitness} with infinite dimensional image}\label{sect41}

\begin{theorem}\label{theo_entanglementwitness_infinite}
Let $\phi\,:\,M_n(\Cbb)\to M_N(\Cbb)$ be a $t$-positive map. Then there exists a unitarily equivariant $t$-positive map 
$\Phi\,:\,M_n(\Cbb)\to \Bcal(\Hcal)$ for some Hilbert space $\Hcal$ such that:
\begin{equation}\label{eq_theo_entanglementwitness}
  \Scal_m(\Phi)\subset\Scal_m(\phi)\,,
\end{equation}
where these quantities were defined in Equation \eqref{eq_entanglementwitness}.
\end{theorem}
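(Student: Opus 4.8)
The plan is to realize $\phi$ itself as a single fiber of a direct integral of all its unitary conjugates, indexed by the group $\Ucal_n$ itself. Because $\phi$ appears literally at the fiber $U=\Id_n$, any density matrix detected by $\phi$ will be detected by $\Phi$; and because the fibers are permuted among themselves by the left regular action of $\Ucal_n$, the resulting map is automatically unitarily equivariant with a genuine unitary representation. This is also the natural choice for the sequel, since by Peter--Weyl the left regular representation contains every irreducible representation of $\Ucal_n$.

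Concretely, I would take $\Hcal=L^2(\Ucal_n,dU)\otimes\Cbb^N$, where $dU$ is the normalized Haar measure, and define $\Phi:M_n(\Cbb)\to\Bcal(\Hcal)$ as the direct integral
\[
\Phi(X)=\int_{\Ucal_n}^{\oplus}\phi(U^*XU)\,dU,
\]
that is, $(\Phi(X)\xi)(U)=\phi(U^*XU)\,\xi(U)$ for $\xi\in\Hcal$ viewed as an $L^2$-section. First I would check that $\Phi$ is well defined and bounded: since $U\mapsto \phi(U^*XU)$ is continuous on the compact group $\Ucal_n$, we have $\|\Phi(X)\|=\sup_U\|\phi(U^*XU)\|<\infty$, and the self-adjointness of $\phi$ gives $\Phi(X^*)=\Phi(X)^*$ fiberwise. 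Next, each fiber map $\phi_U:=\phi\circ\mathrm{Ad}_{U^*}$ is $t$-positive, being the composition of the $t$-positive map $\phi$ with the $*$-automorphism $X\mapsto U^*XU$ (which is completely positive); hence the direct integral $\Phi$ is $t$-positive, because for $Y\geq0$ the operator $(i_t\otimes\Phi)(Y)$ is the direct integral of the positive fibers $(i_t\otimes\phi_U)(Y)\geq0$.

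For equivariance, let $V(W)=\lambda(W)\otimes\Id_N$, where $\lambda$ is the left regular representation $(\lambda(W)\xi)(U)=\xi(W^{-1}U)$, a unitary representation of $\Ucal_n$ by left-invariance of $dU$. A direct computation, using $(V(W)^*\xi)(U)=\xi(WU)$ and the identity $(W^{-1}U)^*X(W^{-1}U)=U^*WXW^*U$, yields $V(W)\Phi(X)V(W)^*=\Phi(WXW^*)$, so $\Phi$ is unitarily equivariant with the unitary representation $W\mapsto V(W)$.

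Finally, for the detection inclusion, I would observe that for $\rho\in\Scal(m\times n)$ the operator $(i_m\otimes\Phi)(\rho)$ is again a direct integral, whose fiber at $U$ equals $(i_m\otimes\phi)\big((\Id_m\otimes U^*)\rho(\Id_m\otimes U)\big)$. If $\rho\in\Scal_m(\Phi)$, i.e. $(i_m\otimes\Phi)(\rho)\geq0$, then these fibers are positive for almost every $U$; since they depend continuously on $U$ and $dU$ has full support, they are positive for \emph{every} $U$, and evaluating at $U=\Id_n$ gives $(i_m\otimes\phi)(\rho)\geq0$, i.e. $\rho\in\Scal_m(\phi)$. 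The only delicate points are the direct-integral formalism itself---making precise that positivity of the direct integral is equivalent to almost-everywhere fiberwise positivity---and the passage from ``almost every $U$'' to ``every $U$,'' which is precisely where continuity of the fibers together with the full support of the Haar measure enters. I expect this measure-theoretic bookkeeping to be the main (though mild) obstacle, the algebraic verifications of boundedness, $t$-positivity and equivariance being routine.
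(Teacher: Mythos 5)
Your proposal is correct and is essentially the paper's own proof: you build the same Hilbert space $\Hcal=L_2(\Ucal_n)\otimes\Cbb^N$, the same fiberwise map $\Phi(X)=\left(\phi(U^*XU)\right)_{U\in\Ucal_n}$, and the same unitary representation $\lambda\otimes i_N$, and you verify $t$-positivity fiberwise just as the paper does (the paper routes this through its Theorem \ref{k-positivity}, but the underlying observation is identical). Your detection step is merely the contrapositive of the paper's: where you pass from almost-everywhere fiber positivity to positivity at $U=\Id_n$ via continuity and full support of the Haar measure, the paper equivalently uses continuity to spread a negative expectation at the identity over a neighborhood of positive Haar measure and cuts off a constant section there, which also disposes of the direct-integral measurability bookkeeping you flag as the only delicate point.
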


Before proving the theorem, we recall some facts about the left regular representation of the unitary group $\Ucal_n$. Recall that 
$L_2(\Ucal_n)$ is the Hilbert space of square integrable functions on $\Ucal_n$ with respect to the Haar measure:
\[L_2(\Ucal_n)=\{f:\Ucal_n\to\Cbb\,;\,\int_{\Ucal_n}\,|f(U)|^2\mu_{\operatorname{Haar}}(dU)<+\infty\}\,.\]
The \emph{left regular representation} $\lambda$ of $\Ucal_n$ on $L_2(\Ucal_n)$ is defined for all $W\in\Ucal_n$ as:
\begin{equation}\label{eq_leftregular}
\lambda_W\,f\in L_2(\Ucal_n)\mapsto \left(U\mapsto f(W^*U)\right)\,. 
\end{equation}

Now define $\Hcal=L_2(\Ucal_n,\Cbb^N)\approx L_2(\Ucal_n)\otimes\Cbb^N$, that is, 
 \begin{equation}\label{eq_H}
\Hcal=\left\{(\psi_U)_{U\in \Ucal_n}\,;\,\psi_U\in\Cbb^N\ \forall U\in\Ucal_n\,,\quad\int_{\Ucal_n}\,\|\psi_U\|^2\mu_{\operatorname{Haar}}(dU)<+\infty\right\}  
 \end{equation}
where the integration is with respect to the Haar measure on $\Ucal_n$. The Hilbert space $\Hcal$ will play the role of the image of 
$\Phi$ in \Cref{theo_entanglementwitness_infinite}. The equivariance property will be given in terms of the following unitary 
representation of $\Ucal_n$ on $\Hcal$:
 \begin{equation}\label{eq_rep}
V:W\mapsto \big(V(W):\psi\in\Hcal\mapsto(\psi_{W^*U})_{U\in\Ucal_n}\big)\,.  
 \end{equation}
 
We can readily check that $V$ is the left regular representation on $L_2(\Ucal_n)$ tensored with the trivial representation on 
$\Cbb^N$, that is, $V=\lambda\otimes i_N$.

Finally, let $\phi\in\Bcal(M_n(\Cbb),M_N(\Cbb))$ be a linear map. We can define a map $\Phi\in\Bcal(M_n(\Cbb),\Bcal(\Hcal))$ as
\begin{equation}\label{eq_Phi}
X\in M_n(\Cbb)\mapsto\Phi(X)=\left(\phi(U^*XU)\right)_{U\in\Ucal_n}\,. 
\end{equation}
Thus for all $X\in M_n(\Cbb)$ and all $(\psi_U)_{U\in \Ucal_n}\in\Hcal$, $\Phi(X)\psi=(\phi(U^*XU)\psi_U)_{U\in \Ucal_n}$.

\begin{lemma}\label{lema_Phi}
Let $\phi\in\Bcal(M_n(\Cbb),M_N(\Cbb))$ be a linear map and define the linear map $\Phi:M_n(\Cbb)\to\Bcal(\Hcal)$ as in \Cref{eq_Phi}. 
Then $\Phi$ is unitarily equivariant with respect to the unitary representation $U\mapsto V(U)$. If furthermore $\phi$ is $t$-positive, 
then $\Phi$ is also $t$-positive.
\end{lemma}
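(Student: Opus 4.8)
The plan is to verify the two claims directly from the definitions in Equations \eqref{eq_rep} and \eqref{eq_Phi}, the whole point being that $\Phi$ acts ``fiberwise'' over $\Ucal_n$ while $V(W)$ merely permutes the fibers.

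For the equivariance, I would first identify the adjoint of the representation. Since the Haar measure is translation invariant, a change of variables $U\mapsto WU$ in the defining inner product shows that $V(W)^*=V(W^*)$ acts by $(V(W)^*\psi)_U=\psi_{WU}$. Then I would unwind the composition $V(W)\,\Phi(X)\,V(W)^*$ applied to a test function $\psi=(\psi_U)_U\in\Hcal$: applying $V(W)^*$ yields the fiber $\psi_{WU}$, applying $\Phi(X)$ multiplies the fiber indexed by $V$ by $\phi(V^*XV)$, and finally applying $V(W)$ re-indexes $U\mapsto W^*U$. Reading off the $U$-fiber of the result gives
\[ \big(V(W)\,\Phi(X)\,V(W)^*\psi\big)_U=\phi\big((W^*U)^*X(W^*U)\big)\,\psi_U=\phi\big(U^*(WXW^*)U\big)\,\psi_U, \]
which is exactly $\big(\Phi(WXW^*)\psi\big)_U$. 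Hence $\Phi(WXW^*)=V(W)\,\Phi(X)\,V(W)^*$ for every unitary $W$, i.e. $\Phi$ is unitarily equivariant for $U\mapsto V(U)$.

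For the $t$-positivity, suppose $\phi$ is $t$-positive and take any positive semi-definite $Z=[Z_{ij}]_{i,j=1}^t\in M_t(\Cbb)\otimes M_n(\Cbb)$; I must show $(i_t\otimes\Phi)(Z)=[\Phi(Z_{ij})]_{i,j=1}^t\geq0$ on $\Cbb^t\otimes\Hcal$. Testing against $\Psi=(\Psi^{(1)},\dots,\Psi^{(t)})$ with $\Psi^{(i)}=(\Psi^{(i)}_U)_U\in\Hcal$ and expanding each fiberwise inner product as a Haar integral, I would rewrite the quadratic form as
\[ \sca{\Psi}{(i_t\otimes\Phi)(Z)\,\Psi}=\int_{\Ucal_n}\ \sum_{i,j=1}^t\,\sca{\Psi^{(i)}_U}{\phi(U^*Z_{ij}U)\,\Psi^{(j)}_U}\ \mu_{\operatorname{Haar}}(dU). \]
The key observation is that, for each fixed $U$, the integrand is precisely $\sca{w_U}{(i_t\otimes\phi)(Z_U)\,w_U}$, where $w_U=(\Psi^{(i)}_U)_{i=1}^t\in\Cbb^t\otimes\Cbb^N$ and $Z_U=(\Id_t\otimes U^*)\,Z\,(\Id_t\otimes U)$ has blocks $U^*Z_{ij}U$. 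Conjugation by the unitary $\Id_t\otimes U$ preserves positivity, so $Z_U\geq0$, and $t$-positivity of $\phi$ then gives $(i_t\otimes\phi)(Z_U)\geq0$. Thus the integrand is nonnegative for every $U$, the integral is nonnegative, and $(i_t\otimes\Phi)(Z)\geq0$.

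Neither step presents a genuine difficulty; the computations are essentially bookkeeping. The only points requiring care are the correct identification of $V(W)^*$ via Haar invariance and the legitimacy of interchanging the finite sums over $i,j$ with the Haar integral when splitting the quadratic form, which is immediate since the fibers live in the finite-dimensional space $\Cbb^N$ and the integrand is integrable. The conceptual content is simply that $\Phi$ is a direct integral of the compressions $X\mapsto\phi(U^*XU)$ of the fixed $t$-positive map $\phi$, so both equivariance and $t$-positivity are inherited fiberwise.
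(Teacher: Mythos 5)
Your proof is correct. The equivariance half is essentially the paper's own argument: both unwind the composition fiberwise and use $V(W)^{*}=V(W^{*})$, so there is nothing substantive to compare there. For the $t$-positivity half, however, you take a genuinely different route. The paper exploits the equivariance it has just established and invokes \Cref{k-positivity}, which for equivariant maps reduces $t$-positivity to positive semi-definiteness of the single block matrix $[\Phi(e_{ij})]_{i,j=1}^{t}$; that matrix is then identified, exactly as in your fiberwise picture, with the field $U\mapsto[\phi(U^{*}e_{ij}U)]_{i,j=1}^{t}$ of positive semi-definite matrices. You instead verify $t$-positivity directly on an arbitrary positive semi-definite $Z\in M_t(\Cbb)\otimes M_n(\Cbb)$, writing the quadratic form as a Haar integral of the quadratic forms of $(i_t\otimes\phi)\big((\Id_t\otimes U)^{*}Z(\Id_t\otimes U)\big)$, each nonnegative since conjugation preserves positivity and $\phi$ is $t$-positive. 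The two arguments share the same kernel (positivity is inherited fiber by fiber in a direct integral), but yours buys something real: it is self-contained and does not require applying \Cref{k-positivity} --- stated in the paper for maps into $M_N(\Cbb)$ with $t\leq\min\{n,N\}$ --- to a map whose codomain is the infinite-dimensional $\Bcal(\Hcal)$, an extension the paper uses tacitly and without comment. The paper's route, in exchange, is shorter and showcases the intended use of the equivariance criterion, which is the conceptual point of the whole construction. One cosmetic remark: you use $V$ both for the representation $V(\cdot)$ and as a dummy index for unitaries in the middle of the equivariance computation; pick a different letter for the latter to avoid a clash.
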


\begin{proof}
 We first prove that $\Phi$ is $U\mapsto V(U)$ equivariant. Indeed, for all $W\in\Ucal_n$, $X\in M_n(\Cbb)$ and all $\psi\in\Hcal$,
 \begin{align*}
  V(W)^*\,\Phi(WXW^*)\,V(W)\,\psi 
  & = V(W)^*\,\Phi(WXW^*)\,(\psi_{W^{*}U})_{U\in\Ucal_n} \\
  & = V(W^*)\,\big(\phi((W^{*}U)^*\,X\,(W^{*}U))\psi_{W^{*}U}\big)_{U\in\Ucal_n} \\
  & = \Phi(X)\psi\,.
\end{align*}
Secondly, we prove that $\Phi$ is $t$-positive. As it is equivariant, by \Cref{k-positivity}, we only need to check that the matrix 
$(\Phi(e_{ij}))_{1\leq i,j\leq t}$ is positive semi-definite, where $(e_{ij})_{1\leq i,j\leq n}$ are the matrix units in $M_n(\Cbb)$. This can be directly 
checked by noticing that $\Hcal\otimes\Cbb^t\simeq L_2(\Ucal_n,\Cbb^{tN})$ so that
\[(\Phi(e_{ij}))_{1\leq i,j\leq t}\simeq\left((\phi(U^*\,e_{ij}\,U))_{1\leq i,j\leq t}\right)_{U\in \Ucal_n}\,.\]
If $\phi$ is $t$-positive, then $(\phi(U^*\,e_{ij}\,U))_{1\leq i,j\leq t}$ is positive semi-definite for all $U\in\Ucal_n$, and so is 
$(\Phi(e_{ij}))_{1\leq i,j\leq t}$.
\end{proof}
 
\begin{proof}[Proof of \Cref{theo_entanglementwitness_infinite}]
Let $\rho\in\Scal(m\times n)$ be a $t$-entangled density matrix such that $(i_m\otimes\phi)(\rho)$ is not positive semi-definite, i.e. 
$\rho\notin\Scal_m(\phi)$. We can now show that $\Phi$ ``detects'' $\rho$, that is, $(i_m\otimes\Phi)(\rho)$ is not positive semi-definite. 
Indeed, this is true as by continuity there exist a vector $\varphi\in(\mathbb{C}^m\otimes\mathbb{C}^N)$ and a small neighborhood $\Ucal$ of 
$I_n$ in $\Ucal_n$ such that for all $U\in\Ucal$, $\langle\varphi\,,(i_m\otimes\phi)(I_m\otimes U\,\rho\,I_m\otimes U^*))\varphi\rangle<0$. Then 
defining the vector $\psi\in(\Cbb^m\otimes\Hcal)$ as:
\[\psi_U=\varphi\quad\forall U\in\Ucal\,,\qquad\psi_U=0\quad\text{ elsewhere }\,,\]
we get that $\langle\psi\,,\,(i_m\otimes\Phi)(\rho)\psi\rangle<0$ which shows that $(i_m\otimes\Phi)(\rho)$ is not positive semi-definite.
\end{proof}

\subsection{Proof of \Cref{theo_entanglementwitness}}

Going from \Cref{theo_entanglementwitness_infinite} to \Cref{theo_entanglementwitness} is a simple application of the Peter-Weyl 
Theorem applied to the compact group $\Ucal_n$. It states that $L_2(\Ucal_n)$ is 

isomorphic as a $\Ucal_n$--space to

the closure of the direct sum of all finite dimensional 
irreducible unitary representations $E_\pi$ of $\Ucal_n$, each with multiplicity equal to its dimension (see \cite{article11} for instance):
\begin{equation}\label{eq_theo_PeterWeyl}
 L_2(\Ucal_n)=\underset{\pi\in\Lambda}{\widehat {\bigoplus}}\,E_\pi\otimes F_\pi\,,
\end{equation}
where $\Lambda$ is the \emph{countable} set that indexes all irreps of $\Ucal_n$ and where $F_\pi$ is of dimension $\dim E_\pi$. 
Furthermore, the left-regular representation $\lambda$ of $\Ucal_n$ on $L_2(\Ucal_n)$ given in \Cref{eq_leftregular} can be 
decomposed according to \Cref{eq_theo_PeterWeyl}:
\begin{equation}\label{eq_leftregular_PeterWeyl}
 \lambda=\underset{\pi\in\Lambda}{\widehat {\oplus}}\,\lambda^{(\pi)}\otimes i_{F_\pi}\,,
\end{equation}
where $\lambda^{(\pi)}$ is the irreducible unitary representation of $\Ucal_n$ on $E_\pi$.\\

We are now ready to prove the main theorem
\begin{proof}[Proof of \Cref{theo_entanglementwitness}]
Let $\phi$ be a $t$-positive map with $1\leq t\leq \min\{m,n\}$. Define $\Hcal$ and the $t$-positive map $\Phi: M_n(\Cbb)\to\Bcal(\Hcal)$ 
as in Equations \eqref{eq_H} and \eqref{eq_Phi} respectively. Remark that by the Peter-Weyl Theorem as stated in \Cref{eq_theo_PeterWeyl}, 
$\Hcal=L_2(\Ucal_n,\Cbb^N)$ can be decomposed as
 \[\Hcal=\underset{\pi\in\Lambda}{\widehat {\bigoplus}}\,E_\pi\otimes (F_\pi\otimes\Cbb^N)\,,\]
 and the unitary representation $\lambda\otimes i_N$ of $\Ucal_n$ on $\Hcal$ can be decomposed as
\[\lambda\otimes i_N=\underset{\pi\in\Lambda}{\widehat {\oplus}}\,\lambda^{(\pi)}\otimes i_{F_\pi\otimes\Cbb^N}\,.\]
As $\Lambda$ is a countable set, we can choose an increasing family of finite sets $(\Lambda_d)_{d\geq0}$ such that 
$\cup_{d\geq0}\Lambda_d=\Lambda$. Let $P_d$ be the orthogonal projection on 
$\underset{\pi\in\Lambda_d}{\widehat {\bigoplus}}\,E_\pi\otimes F_\pi$. Remark that $\Hcal_d:=P_d\Hcal$ is a finite 
dimensional subspace of $\Hcal$ as $\Lambda_d$ is finite. We can define on $\Hcal_d$ the restriction of $\Phi$ as 
$\Phi_d:=P_d\,\Phi(\cdot)\,P_d$. This directly defines a $t$-positive map as $\Phi$ is itself $t$-positive by \Cref{lema_Phi}. 
We check that $\Phi_d$ is equivariant. We define the unitary representation $\Pi_d$ on $\Hcal_d$ as the restriction of 
$(\lambda\otimes i_N)$ to $\Hcal_d$: $\Pi_d(U):=P_d\,\lambda(U)\otimes\mathds 1_N\,P_d$. Using that by \Cref{lema_Phi} 
$\Phi$ is $(\lambda\otimes i_N)$-equivariant, it is a straightforward computation to show that $\Phi_d$ is $\Pi_d$-equivariant.\\

Let now $\rho\in\Scal(m\times n)$ be a $t$-entangled density matrix such that $\rho\notin\Scal_m(\phi)$. 
By \Cref{theo_entanglementwitness_infinite}, $\rho\notin\Scal_m(\Phi)$, so that there exists $\psi\in(\Cbb^m\otimes\Hcal)$ such that 
$\langle\psi\,,\,(i_m\otimes\Phi)(\rho)\psi\rangle<0$. Define $\psi_d=P_d\psi\in\Hcal_d$. As $\psi_d\to\psi$ in $\Hcal$ as $d$ 
goes to infinity, we also have $\langle\psi_d\,,\,(i_m\otimes\Phi_d)(\rho)\psi_d\rangle\to\langle\psi\,,\,(i_m\otimes\Phi)(\rho)\psi\rangle$ 
as $d\to+\infty$, 
so there exists $d_0\geq0$ such that for all $d\geq d_0$, $\langle\psi_d\,,\,(i_m\otimes\Phi_d)(\rho)\psi_d\rangle<0$. This concludes the proof.
\end{proof}

Let us finish this section by the following corollary that might be of theoretical interest.
\begin{corollary}\label{coro_entanglementwitness}
There 
exists a family of unitarily equivariant $t$-positive maps $(\Phi_a)_{a\geq1}$ with 
$\Phi_a\,:\,M_n(\Cbb)\to M_{N_a}(\Cbb)$
such that, for each $m$, $(\Scal_m(\Phi_a))_{a\geq1}$ 
decreases to $\operatorname{SN}_t(m,n)$ as $a\to\infty$.
\end{corollary}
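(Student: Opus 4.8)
The plan is to build the single family $(\Phi_a)$ by applying \Cref{theo_entanglementwitness} to a countable dense collection of $t$-positive maps, and then to force the sets $\Scal_m(\Phi_a)$ to decrease by taking direct sums.

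First I would recall the characterization quoted above: a density matrix $\rho\in\Scal(m\times n)$ is $t$-separable if and only if $(i_m\otimes\phi)(\rho)\geq0$ for every $t$-positive map $\phi:M_n(\Cbb)\to M_n(\Cbb)$, so that $\operatorname{SN}_t(m,n)=\bigcap_{\phi}\Scal_m(\phi)$, the intersection ranging over all such $\phi$. Since the target is fixed to $M_n(\Cbb)$, the self-adjoint linear maps $M_n(\Cbb)\to M_n(\Cbb)$ form a finite-dimensional, hence separable, space, and the $t$-positive ones form a closed cone $K$ admitting a countable dense family $(\phi_j)_{j\geq1}$. I claim $\bigcap_j\Scal_m(\phi_j)=\operatorname{SN}_t(m,n)$ for every $m$. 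The inclusion $\supseteq$ is immediate since each $\phi_j$ is $t$-positive. For $\subseteq$, if $\rho\notin\operatorname{SN}_t(m,n)$ then some $\phi^{*}\in K$ makes the smallest eigenvalue of $(i_m\otimes\phi^{*})(\rho)$ strictly negative; as this eigenvalue depends continuously on $\phi$, it remains negative on a neighborhood of $\phi^{*}$, which by density contains some $\phi_j\in K$, whence $\rho\notin\Scal_m(\phi_j)$.

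Next, for each $\phi_j$ that is not completely positive I would apply \Cref{theo_entanglementwitness} to obtain a family $(\Phi^{(j)}_a)_{a\geq1}$ of unitarily equivariant $t$-positive maps with $\bigcap_a\Scal_m(\Phi^{(j)}_a)\subseteq\Scal_m(\phi_j)$ (the completely positive $\phi_j$ have $\Scal_m(\phi_j)=\Scal(m\times n)$, so they impose no constraint and can be omitted). A crucial point is that the construction in that proof---the compressions $P_d\,\Phi(\cdot)\,P_d$ of the universal map of \Cref{theo_entanglementwitness_infinite}, built only from $\phi_j$ and the Peter--Weyl decomposition of $L_2(\Ucal_n)$---does not involve $m$, so the very same maps $\Phi^{(j)}_a$ witness the inclusion for all $m$ simultaneously. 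I then enumerate the countable collection $\{\Phi^{(j)}_a:a,j\geq1\}$ as a single sequence $(\Psi_b)_{b\geq1}$, and set $\Phi_a:=\bigoplus_{b=1}^{a}\Psi_b$. Direct sums preserve unitary equivariance (with implementing unitary $\bigoplus_b V_b$) and $t$-positivity (a block-diagonal operator is positive semi-definite exactly when each block is). Moreover $(i_m\otimes\Phi_a)(\rho)$ equals, up to a unitary permutation of coordinates, $\bigoplus_{b\leq a}(i_m\otimes\Psi_b)(\rho)$, so $\Scal_m(\Phi_a)=\bigcap_{b\leq a}\Scal_m(\Psi_b)$, which is patently decreasing in $a$. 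Letting $a\to\infty$ gives $\bigcap_a\Scal_m(\Phi_a)=\bigcap_b\Scal_m(\Psi_b)\subseteq\bigcap_j\Scal_m(\phi_j)=\operatorname{SN}_t(m,n)$, while $t$-positivity of each $\Phi_a$ yields $\operatorname{SN}_t(m,n)\subseteq\Scal_m(\Phi_a)$; hence the family decreases to $\operatorname{SN}_t(m,n)$.

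The step I expect to be the main obstacle is the countable reduction in the first paragraph: one must verify that a dense family of $t$-positive maps still detects every $t$-entangled state. This hinges on the non-positivity of $(i_m\otimes\phi)(\rho)$ being an open condition in $\phi$, together with fixing the target dimension to $n$, which keeps the relevant cone $K$ inside a finite-dimensional---thus separable---space. The remaining ingredients, namely the direct-sum bookkeeping and the $m$-independence of the construction in \Cref{theo_entanglementwitness}, are routine.
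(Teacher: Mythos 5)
Your proposal is correct, but it diagonalizes over a different countable dense set than the paper does, and the difference is substantive. The paper's own proof (two sentences long) fixes $m$, takes a countable dense subset of the $t$-entangled \emph{states}, detects each of them via \Cref{theo_entanglementwitness}, makes the family nested by direct sums (exactly your bookkeeping step), and then invokes an additional diagonal argument to handle all $m$. You instead take a countable dense subset of the cone of $t$-positive \emph{maps} $M_n(\Cbb)\to M_n(\Cbb)$ --- legitimate, since this cone sits in a finite-dimensional space --- and use that non-positivity of $(i_m\otimes\phi)(\rho)$ is an open condition in $\phi$. This buys you two things. First, your density reduction is elementary: openness in the map variable immediately shows that the dense family $(\phi_j)$ detects every $t$-entangled state. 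The paper's reduction over states is more delicate than its phrasing suggests: detection is an open condition in $\rho$, and an open set containing a dense subset of the (relatively open) set of $t$-entangled states need not contain all of it; to close that gap one has to use that each $\Scal_m(\Phi_a)$ is closed and \emph{convex} and that $\operatorname{SN}_t(m,n)$ has nonempty interior, whence a strictly larger closed convex intersection would contain an open ball of undetected $t$-entangled states, contradicting density. Second, your observation that the maps produced by \Cref{theo_entanglementwitness} are built from $\phi_j$ and the Peter--Weyl compressions alone, hence do not depend on $m$, disposes of the quantifier ``for each $m$'' with no extra diagonalization, whereas the paper needs its second diagonal argument for that. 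Both arguments share the direct-sum trick and rest on \Cref{theo_entanglementwitness} as the essential input, so the proofs are cousins, but your dualization from states to maps is a genuinely different, and arguably more robust, way to run the countability step.
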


\begin{proof}
For the same statement with $m$ fixed, it is a diagonal argument applied to a dense countable subset 
of the $t$-entangled states (given that a direct sum of $l$ entanglement witnesses for $l$ entangled states 
witnesses all $l$ states simultaneously).
For generic $m$ this is an additional diagonal argument. 
\end{proof}

\section{Acknowledgement}
Part of this work was made during IB's visit to Kyoto University, supported by Campus France Sakura project. IB and BC were supported by 
the ANR project StoQ ANR-14-CE25-0003-01. BC was supported by Kakenhi 15KK0162, 17H04823, 17K18734. GS would like to 
acknowledge FRIENDSHIP project of Japan International Corporation Agency (JICA) for research fellowship (D-15-90284) 
and kakenhi 15KK0162. GS would like to thank Prof. B.V Raja Ram Bhat for supporting her from his JC Bose grant.
All authors are very grateful to Prof Hiroyuki Osaka for a careful reading and very helpful comments on a preliminary version of the paper, 
to Prof Yuki Arano for suggesting an improvement to the statement and the proof of Theorem
\ref{theo_char_unitaryequivariant}, and to Prof Masaki Izumi for sharing insight on Theorem
\ref{theo_entanglementwitness}.

\bibliographystyle{ieeetr}
\bibliography{reportmy}

\noindent (Ivan Bardet) Department of Applied Mathematics and Theoretical Physics, University of Cambridge, Cambridge CB3 0WA, England
France \\ E-mail: bardetivan@gmail.com\\

\noindent (Beno\^\i t Collins) Department of Mathematics, Graduate School of Science,
Kyoto University, Kyoto 606-8502, Japan \\
E-mail address: collins@math.kyoto-u.ac.jp\\

\noindent (Gunjan Sapra) Indian Statistical Institute, R. V. College Post, Bangalore-560059, India\\
E-mail address: gunjan18{\textunderscore}vs@isibang.ac.in

	\end{document}